\date{\today}
\DeclareMathOperator{\spt}{\mathbf{spt}}
\newtheorem{theorem}{Theorem}[section]
\newtheorem{prop}[theorem]{Proposition}
\newtheorem{corollary}[theorem]{Corollary}
\newtheorem{lemma}[theorem]{Lemma}
\theoremstyle{definition}
\newtheorem{definition}[theorem]{Definition}
\newtheorem{exmp}[theorem]{Example}
\newtheorem{remark}[theorem]{Remark}
\newcommand{\pmat}[1]{\begin{pmatrix}#1\end{pmatrix}}
\newcommand{\dg}[1]{\backslash #1\backslash}
\newcommand{\pt}{\ensuremath{\Tilde{p}}}
\newcommand{\qt}{\ensuremath{\Tilde{q}}}
\newcommand{\R}{\mathbb{R}}
\newcommand{\Nset}{\underline{N}}
\DeclareMathOperator{\diag}{\rm diag}
\newcommand{\RNgt}{{\R^N_+}} 
\newcommand{\RNge}{{\overline{\RNgt}}} 
\newcommand{\calA}{{\mathcal A}}
\newcommand{\calC}{{\mathcal C}}
\newcommand{\calI}{{\mathcal I}}
\newcommand{\calJ}{{\mathcal J}}
\newcommand{\calP}{{\mathcal P}}
\newcommand{\calQ}{{\mathcal Q}}
\newcommand{\AII}{A_{\calI\calI}}
\newcommand{\BII}{B_{\calI\calI}}
\newcommand{\BIJ}{B_{\calI\calJ}}
\newcommand{\BJI}{B_{\calJ\calI}}
\newcommand{\BJJ}{B_{\calJ\calJ}}
\newcommand{\BQQ}{B_{\calQ\calQ}}
\newcommand{\inv}{^{-1}}
\newcommand{\eps}{\epsilon}
\newcommand{\one}{\mathds{1}}
\title
[Exclusion and multiplicity in Lotka-Volterra systems]
{Exclusion  and multiplicity for stable communities in Lotka-Volterra systems}
\author[Won Eui Hong and Robert L Pego]{Won Eui Hong \and Robert L Pego}
\address{Department of Mathematical Sciences,
Carnegie Mellon University, Pittsburgh, PA 15213, USA}
\email{woneuih@andrew.cmu.edu}
\email{rpego@cmu.edu}
\begin{document}

\begin{abstract}
For classic Lotka-Volterra systems governing many interacting species,
we establish an exclusion principle that rules out the
existence of linearly asymptotically stable steady states in subcommunities 
of communities that admit a stable state which is internally $D$-stable.
This type of stability is known to be ensured, e.g., 
by diagonal dominance or Volterra-Lyapunov stability conditions.
By consequence, the number of stable steady states of this type is bounded
by Sperner's lemma on anti-chains in a poset.
The number of stable steady states can nevertheless be very large 
if there are many groups of species that strongly inhibit outsiders
but have weak interactions among themselves.

By examples we also show that in general it is possible for a stable community to 
contain a stable subcommunity consisting of a single species.  Thus a recent empirical
finding to the contrary, in a study of random competitive systems by Lischke and L\"offler
(Theo.~Pop.~Biol.~115 (2017) 24--34), does not hold without qualification.
\vspace{-0.7cm}
\end{abstract}

\keywords{Gause's law, multiple stable equilibria, clique, evolutionarily stable states, biodiversity, $P_0$ matrices}

\subjclass[2020]{92D25, 92D40, 34D20, 91A22}

\maketitle

\tableofcontents
\addtocontents{toc}{\protect\setcounter{tocdepth}{1}}

\section{Introduction}

Lotka-Volterra systems comprise a family of classic and prototypical models in population ecology.
They incorporate nonlinear feedback and regulation mechanisms of 
clear biological importance in a structurally simple way 
that renders them fairly amenable to mathematical analysis.
Partly for this reason they retain value and interest alongside 
models of greater complexity and realism \cite{chesson2000mechanisms}.

Based on such a model, Volterra in 1928 \cite{volterra1928variations} demonstrated 
that two species exploiting a common resource cannot stably coexist.  
Volterra's findings strongly influenced the development of competitive exclusion principles 
and ecological niche theory by Gause \cite{gause1934experimental}, 
Hutchinson \cite{hutchinson1957} and many others.
The notion of competitive exclusion in general Lotka-Volterra competition models 
was later investigated mathematically rather thoroughly and was found to be subject to a number of
limitations \cite{armstrong1980competitive,mcgehee1977}.  
Moreover it was discovered that, in principle, dynamics in such models can be arbitrarily
complicated, admitting time-periodic and even chaotic behavior in systems with
only a few species \cite{may1975,gilpin1975limit,smale1976}.  Nevertheless, the
concept of competitive exclusion remains valuable and influential in ecology,
as recently noted by Pocheville \cite{pocheville2015}.

The present work is motivated by investigations regarding the number of
alternative stable steady states that a given (or typical) Lotka-Volterra
competition model may admit.  
Such investigations relate to a variety of significant issues in ecology, 
such as whether a given local community of species might be 
 susceptible to invasion by a species that is not yet present,
how a particular assembly of species may have come to co-exist,
or whether different outcomes may have been possible 
based on different histories of invasion.  
See
\cite{gilpin1976multiple,
may1977,
case1979global,
ings2009ecological,
svirezhev2008nonlinearities,
lawmorton1993alternative,
kokkoris1999patterns}
for a small selection of papers that address such issues. 

Recently, Lischke and L\"offler \cite{lischke2017finding} developed
numerical methods for efficiently finding all the possible stable steady states
in a given Lotka-Volterra model. They carried out extensive numerical experiments to
analyze a class of random competitive systems for up to 60 species, examining the
effect of relative sizes of competition coefficients on the number and type of stable equilibria.
In a small percentage of cases they find more than 30 alternative stable steady states.
In addition they mention an empirical finding related to an exclusion principle.
Loosely paraphrasing, they found that no species which 
forms by itself a single-species stable community 
was ever observed to be a member of any alternative stable community.
If this were always true, then one could often greatly simplify the search for stable
communities by studying the stability of the simple single-species steady states.

{\it Exclusion.}
Below, we establish several community exclusion principles 
related to these findings.
We prove that a generalization of the empirical Lischke-L\"offler
exclusion principle is valid in certain circumstances.  
In the special case of symmetric (or diagonally symmetrizable) interspecific interaction coefficients,
it is universally the case for all Lotka-Volterra systems, whether competitive or not,
that any stable community can neither contain nor be contained in any other such.
(See Corollary~\ref{cor:LL}.)

In the general case without symmetry, we show (Theorem~\ref{t:1exclusion}) that 
no two stable communities can differ by exactly one species. 
Furthermore, any community that is ``internally $D$-stable'' 
does not admit any stable subcommunity (Theorem~\ref{t:Dexclusion}).
The general concept of $D$-stability has been
much studied for constant-coefficient linear systems of differential equations in general and 
linearized Lotka-Volterra systems in particular;
see \cite{cross1978,logofet2005,kushel2019} and \cite[Sec.~15.6]{HofbauerSigmund}.
Practical criteria that precisely characterize $D$-stability are not known 
in general, but sufficient criteria include stability due to diagonal dominance, 
and Volterra-Lyapunov stability, meaning sign-definiteness of an associated quadratic form after diagonal scaling~\cite[Prop.~1]{cross1978}.

A still-open conjecture of Hofbauer and Sigmund states that an equilibrium state
involving all species of a Lotka-Volterra system is globally attracting 
if the interspecific interaction matrix is $D$-stable.
Theorem~\ref{t:Dexclusion} supports this conjecture insofar as it implies that no other
equilibrium involving fewer species can be locally attracting. 

The empirical Lischke-L\"offler exclusion principle for Lotka-Volterra competitive systems
turns out not to be valid without some qualification, however.  By example, we show in Section~\ref{ss:counterex} 
that a single species forming a stable equilibrium by itself can be contained in a larger stable community. 
It is plausible that such systems may be rare in typical random ensembles.  
If that is the case, an exclusion property for stable subcommunities may be expected, though not guaranteed.

{\it Multiplicity.}  The maximum number of stable steady states 
that can co-exist in Lotka-Volterra systems is an interesting quantity to consider,
and can be limited by community exclusion principles such as we study here.
If all interspecific interactions are symmetric, or all stable states are internally $D$-stable, then
the maximum number of stable equilibria is bounded via Sperner's lemma for 
anti-chains in posets \cite{lubell1966}; see Section~\ref{ss:sperner} below.
For $N$ species with $N$ large, this bound is approximately $2^N\sqrt{2/\pi N}$,
which is a number somewhat smaller than $2^N$, the number of all subsets of the $N$ species,
but one that still grows exponentially fast in $N$.
We do not know whether the bound from Sperner's lemma is sharp.

It is true that exponentially many alternative stable subcommunities are
possible in principle, however.  
Particular highly symmetric examples can be constructed 
similar to how cliques in graphs have been used
to form stable states in game theory~\cite{vickers1988patterns}
and continuous-time models of allele selection in population genetics 
(replicator equations with symmetric payoff matrix) \cite[p.~255]{HofbauerSigmund}.

In Section~\ref{s:multiplicity} we describe and generalize this construction and establish
quantitative criteria capable of ensuring that large numbers of 
alternative stable subcommunities are possible in certain Lotka-Volterra systems 
for $N$ species.
This can happen when many different communities can be formed consisting of species
that compete weakly with each other while strongly inhibiting outsiders.
Our criteria may have relevance for some biological systems.
E.g., certain recent works \cite{coyte2015ecology,goldford2018emergent} suggest that
there may be common patterns of interaction among the many alternative species in
naturally occurring microbiomes. 
In particular, weak interactions may be predominant in the 
microbiome of the human gut---a community comprising hundreds of species of bacteria---but 
the presence of some strongly competitive interactions can have a stabilizing 
effect~\cite{coyte2015ecology}. 

{\it Relation to evolutionary game theory.}
It is well known that there is an equivalence between the dynamics of a given Lotka-Volterra system
and those of a corresponding family of \emph{replicator equations} in evolutionary game theory. 
A rather extensive body of work exists concerning 
exclusion principles and multiplicity for stable states in replicator equations. 
Some of the findings in this opus carry back readily to Lotka-Volterra systems.
For others, their game-theoretic meaning has no evident significance in the Lotka-Volterra context.
The degenerate nature of the correspondence can also get in the way.

We will make a detailed comparison of our findings with corresponding results 
on replicator equations in Section~\ref{s:ESS}.
Of special significance is 
the game-theoretic notion of an \emph{evolutionarily stable state (ESS)},
which has been extensively explored following its introduction by 
Maynard Smith and Price \cite{smith1973logic} in an analysis of animal conflict.
Each ESS is a locally attracting steady state for replicator dynamics,
but the reverse is not generally true for non-symmetric payoff matrices.
The supports of ESSs are known to satisfy the same type of exclusion principle 
(a non-containment property known as the Bishop-Cannings theorem~\cite{bishop1976models}) 
as we establish here for internally $D$-stable equilibria in Lotka-Volterra systems. 

We show that the ESS notion does not correspond to internal $D$-stability under the 
replicator--Lotka-Volterra equivalence, however. 
Nor are ESSs invariant under diagonal scalings natural to Lotka-Volterra systems.
An interesting and extensive understanding of the multiplicity and
patterns of possible ESSs for large numbers of strategies has been achieved;
the recent paper~\cite{bomze2020ess} has pointers to much relevant literature.
Yet it remains unclear whether corresponding results hold which are meaningful for Lotka-Volterra systems.

\section{Lotka-Volterra systems and notions of stability}

\subsection{Governing equations} Lotka-Volterra systems model the time evolution of 
the populations $p_i$ of a finite set of $N$ species indexed by $i\in\Nset:=\{1,2,\ldots,N\}$. 
With $'$ denoting the time derivative, the governing differential equations take the form
    \begin{equation}\label{eq:generalLV}
        p'_{i} = p_{i}(a_i - \sum_{j\in\Nset} B_{ij}p_j) \,, \quad i\in\Nset.
    \end{equation}
Here $a_i$ represents an intrinsic growth rate for species $i$ in the limit when all populations are small,
and $B_{ij}$ is a coefficient which, if positive, induces a competitive or inhibiting effect 
of the presence of species $j$ on the growth of species $i$.
Throughout this paper we take the coefficients $a_i$ and $B_{ij}$ to be constant in time.

Almost exclusively, our interest is in solutions of \eqref{eq:generalLV}
belonging to the state space 
$\RNge= \{p\in \mathbb{R}^N : p_i\ge 0\;\forall i\}$,
since negative species populations are normally not meaningful.
It is convenient that this space is invariant for solutions of system
\eqref{eq:generalLV}.

Given a state $p\in\RNge$, the \emph{community supporting} $p$ will
refer to the set of species $i$ for which $p_i>0$.
Mathematically this is the support, denoted $\spt p=\{i\in\Nset:p_i>0\}$.
The community supporting a solution $t\mapsto p(t)$ is time-invariant,
since $p_i(t)$ is either always positive or always zero.

In order to write this system in a convenient matrix-vector
form, we define
\[
\dg{p} = \diag(p_1,\ldots,p_N)
\]
to denote the diagonal matrix with successive diagonal entries 
$p_1,\ldots,p_N$.
With this notation, equation \eqref{eq:generalLV} takes the form
    \begin{equation} \label{e:LV2}
       p' = \dg{p}(a - Bp)   \,.
    \end{equation}

\subsection{Equilibria, linearization, scaling}

A vector $\pt\in\RNge$ is a steady state (or equilibrium) for the system~\eqref{eq:generalLV}
if and only if  
\begin{equation}\label{e:eq1}
a_i-(B\pt)_i=0 \quad\text{for each $i\in\spt\pt$.}
\end{equation}
We will analyze the system in block form with respect to the support community $\calI=\spt\pt$ and 
its complement $\calJ = \Nset\setminus\calI$, via the notation
\[
p = \begin{pmatrix}p_\calI\\ p_\calJ\end{pmatrix},\qquad
a = \begin{pmatrix}a_\calI\\ a_\calJ\end{pmatrix},\qquad
B = \begin{pmatrix}\BII &\BIJ\\ \BJI &\BJJ\end{pmatrix}.
\]
Then $\pt_\calJ=0$, and  \eqref{e:eq1} means that $a_\calI = \BII\pt_\calI$.
Thus for any community $\calI\subseteq\Nset$, if $\BII$ is invertible then 
the community $\calI$ supports at most one steady state in $\RNge$.

The linearized equation of evolution for small perturbations $q$ around the steady state $\pt$ 
takes the form 
\begin{align} \label{e:linear}
             q'&=Aq,
\end{align}
where the constant matrix $A$ is explicitly given by 
\[
A_{ij} = \begin{cases} 
-\pt_i B_{ij} & \text{for $i\in\calI$ and any $j\in\Nset$},\\
a_i-(B\pt)_i & \text{for $i\notin\calI$ and $j=i$},\\
0 &\text{for $i\notin\calI$ and $j\ne i$.}
\end{cases}
\]
In block form using the diagonal-matrix notation $\dg{p}$ above, we can write
\begin{equation}\label{d:blockL}
 A = \dg{a-B\pt}-\dg{\pt} B = \begin{pmatrix} 
  -\dg{\pt_\calI} \BII & -\dg{\pt_\calI} \BIJ \\
   0 & \dg{a_\calJ-\BJI\pt_\calI}
\end{pmatrix} .
\end{equation}

Diagonal scaling will sometimes be used for our analysis. 
If $D=(d_{ij})$ is a diagonal matrix with positive diagonal entries $d_{ii}>0$,
the change of variables $p=D\hat p$ maps \eqref{e:LV2} to the system
\begin{equation}\label{e:Dscale}
 \hat p' = \dg{\hat p}(a-\hat B\hat p), \qquad \hat B=B D,
\end{equation}
having scaled columns, with $\hat B_{ij}=B_{ij} d_{jj}$. 
If $d_{ii}=\pt_i$ for $i\in\calI=\spt\pt$, 
the scaled equilibrium is uniform over $\calI$, 
with $\hat p_i=1$ if and only if $i\in\calI$, which we write as
$\hat p = \one_\calI$.

\subsection{Notions of stability}

\subsubsection{Matrix conditions}\label{sss:matrix}
We recall a few standard definitions for matrices that relate to the stability properties of 
the linear system~\eqref{e:linear}~\cite{cross1978,logofet2005,HofbauerSigmund}. 
\begin{definition} Let $A$ be a real $N\times N$ matrix.
        \begin{enumerate}
            \item $A$ is \emph{stable} if every eigenvalue of $A$ has negative real part.
            \item $A$ is \emph{$D$-stable} if $DA$ is stable for all diagonal $D>0$.
            \item $A$ is \emph{Volterra-Lyapunov stable} (VL-stable) if there exists some 
	    diagonal $D>0$ for which $DA+A^TD<0$, or equivalently 
	    $\langle x, DAx\rangle <0$ for all $x\in\R^N\setminus\{0\}$.
        \end{enumerate}
    \end{definition}
Here the notation $S>0$ (resp. $S\ge0$ or $S<0$) for a real symmetric matrix $S$
means $S$ is positive definite (resp. positive semidefinite or negative definite),
and $\langle\cdot,\cdot\rangle$ denotes the standard inner product in $\R^N$.

It is known that Volterra-Lyapunov stability implies $D$-stability;
see \cite[Prop.~1]{cross1978}.
Of course, $D$-stability implies stability.  
In case $A$ is symmetric, the three notions are equivalent, since stability is
equivalent to negative definiteness. 

The three notions are equivalent also in case $A$ is \emph{$D$-symmetrizable}, 
meaning $D_1 A D_2$ is symmetric for some positive diagonal $D_1$, $D_2$. 
For if $A$ is stable and $D=D_2^{-1/2}D_1^{1/2}$, then the symmetric matrix 
$S=DAD\inv<0$, hence $2 DSD= D^2A+A^TD^2<0$, thus $A$ is VL-stable.

\subsubsection{Linear stability}

For Lotka-Volterra systems in general, it is arguably natural to study stability 
restricted to the invariant state space $\RNge$. In linearly degenerate cases this leads to some subtleties. 
E.g., in the simple example of the system
\[
\begin{cases}
    x'=\beta x^2,\\
    y'=-y.
\end{cases}
\]
the origin is clearly not stable in $\R^2$ whenever $\beta\ne0$, but if $\beta<0$ it is asymptotically stable with respect to dynamics restricted to the  quadrant $\overline{\R^2_+}$. 

Our main results concern equilibria $\pt$ which are stable in the nondegenerate
sense of being \emph{linearly asymptotically stable in $\R^N$.}
This means that $q(t)\to0$ as $t\to\infty$ for every solution of \eqref{e:linear} in $\R^N$.  
Arguably, we should compare this to the putatively weaker property that 
$q(t)\to0$ as $t\to\infty$ for just those solutions of \eqref{e:linear} 
for which $\pt+\eps q\in\RNge$ for sufficiently small $\eps>0$. 
We say \emph{$\pt$ is linearly asymptotically stable in $\RNge$} provided
this is the case, which simply means
\begin{equation}\label{c:qi}
q_i\ge0 \quad\mbox{whenever}\quad \pt_i=0.
\end{equation}
These two notions of linear asymptotic stability turn out to be equivalent, however,
so we will not need to refer to the second notion in what follows.

    \begin{prop}\label{p:easd}
       Let $\pt\in\RNge$ be an equilibrium  for the Lotka-Volterra system \eqref{eq:generalLV}. 
        Then the following are equivalent:
        \begin{itemize}
         \item[(i)] $\pt$ is linearly asymptotically stable in $\R^N$.
         \item[(ii)] $\pt$ is linearly asymptotically stable in $\RNge$.
        \item[(iii)] $A$ is stable.
        \end{itemize}
\end{prop}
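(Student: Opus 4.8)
The plan is to close the cycle (iii) $\Rightarrow$ (i) $\Rightarrow$ (ii) $\Rightarrow$ (iii), with only the last implication requiring argument. The implication (iii) $\Rightarrow$ (i) is the textbook fact that for the autonomous linear system \eqref{e:linear} one has $q(t)=e^{tA}q(0)$, and $e^{tA}\to0$ as $t\to\infty$ precisely when every eigenvalue of $A$ has negative real part; I would simply invoke this. For (i) $\Rightarrow$ (ii), observe that every solution satisfying the one-sided constraint \eqref{c:qi} is in particular a solution in $\R^N$, so if all solutions decay then certainly the constrained ones do.

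For the substantive implication (ii) $\Rightarrow$ (iii), I would exploit the block-triangular form \eqref{d:blockL}. Since $A$ is block upper triangular with diagonal lower-right block $\dg{a_\calJ-\BJI\pt_\calI}$, its spectrum is the union of $\operatorname{spec}(\AII)$ and the real diagonal entries $\lambda_i:=a_i-(B\pt)_i$ for $i\in\calJ$; thus $A$ is stable if and only if $\AII$ is stable and $\lambda_i<0$ for every $i\in\calJ$. Moreover the $\calJ$-block decouples: writing $q=(q_\calI,q_\calJ)$, equation \eqref{e:linear} gives $q_\calJ'=\dg{a_\calJ-\BJI\pt_\calI}q_\calJ$, so $q_i(t)=e^{\lambda_i t}q_i(0)$ for each $i\in\calJ$.

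I would then read off the two required facts from well-chosen admissible perturbations. First, for fixed $i\in\calJ$ the initial datum $q(0)=e_i$ satisfies \eqref{c:qi} (its only nonzero constrained entry is the $i$th, equal to $1\ge0$), and its $i$th component is $e^{\lambda_i t}$; since (ii) forces $q(t)\to0$, we get $e^{\lambda_i t}\to0$, hence $\lambda_i<0$. Second, any datum with $q_\calJ(0)=0$ trivially satisfies \eqref{c:qi} and keeps $q_\calJ(t)\equiv0$, so that $q_\calI'=\AII q_\calI$; as $q_\calI(0)$ ranges over all of $\R^\calI$ and every such solution decays by (ii), we conclude $e^{t\AII}\to0$, i.e.\ $\AII$ is stable. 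Together these give stability of $A$.

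The only point that looks like an obstacle is that the admissible perturbations form the cone $q_\calJ\ge0$ rather than a full subspace, so a priori one cannot freely excite every eigendirection. This dissolves because $\dg{a_\calJ-\BJI\pt_\calI}$ is \emph{diagonal}: the basis vectors $e_i$ with $i\in\calJ$ are genuine eigenvectors that lie in the cone, so each ``outside'' eigenvalue is detected by a single one-sided direction, while the ``inside'' block $\AII$ is probed through the face $q_\calJ=0$, where all of $\R^\calI$ is admissible. I therefore expect the proof to be short; the one thing to get right is this reduction to the block-triangular spectrum, rather than attempting a direct Lyapunov argument on the constrained region.
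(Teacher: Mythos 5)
Your proof is correct, but it takes a genuinely different route from the paper's. For the substantive implication (ii) $\Rightarrow$ (i)/(iii), the paper argues softly: if $A$ is not stable, the set of initial data whose solutions decay is a \emph{strict} subspace of $\R^N$, which cannot contain an open set, whereas the admissible cone \eqref{c:qi} does contain an open set (e.g.\ $q_\calI$ arbitrary, $q_\calJ>0$) --- contradiction. That argument never looks at the structure of $A$ and would apply to any linear system whose admissible perturbations have nonempty interior. You instead exploit the block-triangular form \eqref{d:blockL} explicitly: the outside block is diagonal and decoupled, so each eigenvalue $a_i-(B\pt)_i$, $i\in\calJ$, is detected by the single one-sided direction $e_i$ (which lies in the cone), while the inside block $\AII$ is probed on the face $q_\calJ=0$, where the full subspace $\R^\calI$ is admissible. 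Your version is longer but more constructive, and it makes transparent exactly why the one-sided constraint costs nothing here; the paper's version is shorter and more robust to the form of $A$. All the individual steps in your argument check out (the spectrum of $A$ is indeed the union of $\operatorname{spec}(\AII)$ and the diagonal entries of the $\calJ$-block, and decay of $e^{t\AII}v$ for every $v$ does force stability of $\AII$), so there is no gap --- just a different, equally valid proof.
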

\begin{proof}
    The equivalence of conditions (i) and (iii) is well known, and (i) implies (ii). 
    If (i) fails to hold, then the matrix $A$ in \eqref{e:linear} has some eigenvalue with non-negative real part. By consequence, each solution satisfying $q(t)\to0$ as $t\to\infty$ must lie in a \emph{strict} subspace of $\R^N$, which cannot contain an open set in $\R^N$. Since \eqref{c:qi} allows an open set of perturbations, we can conclude that (ii) implies (i).
\end{proof}

For brevity, we say $\pt$  is \emph{strictly stable} 
if $\pt$ is linearly asymptotically stable.
We call $\calI$ a \emph{stable community} if it supports a strictly stable equilibrium $\pt$. 
\subsection{Nonlinear stability}
There is a substantial body of literature 
regarding the nonlinear stability of Lotka-Volterra equilibria, especially 
with respect to solutions with positive population $p_i$ for every species considered,
so that $p(t)\in \RNgt= \{p\in \mathbb{R}^N : p_i> 0\;\forall i\}$ for all $t$.
For example, the books of Goh \cite{goh1980management}, Takeuchi~\cite{takeuchi}
and Hofbauer and Sigmund~\cite{HofbauerSigmund} contain much information. 
We will mainly leave aside issues concerning degenerate cases that involve
eigenvalues with zero real part.

As is well known, condition (i) above ensures that the equilibrium $\pt$ 
is locally asymptotically stable, 
i.e., it attracts all solutions of \eqref{eq:generalLV} in a small enough neighborhood in $\R^N$.
Also well known is the fact that 
\emph{$\pt$ globally attracts all solutions in $\RNgt$ if $-B$ is Volterra-Lyapunov stable};
see \cite{goh1977global} and \cite[p.~191]{HofbauerSigmund}.

Hofbauer and Sigmund have conjectured in \cite[p.~200]{HofbauerSigmund} that 
$\pt$ globally attracts all solutions in $\RNgt$
if $A$ is $D$-stable.
To our knowledge, this conjecture remains open.

\subsection{Internal stability}\label{ss:internal}
Given an equilibrium $\pt$ with support community $\calI$, it is often natural
to consider its stability with respect to solutions supported by the same community.
\begin{definition}\label{d:internal}
Let $\pt$ be an equilibrium state for the Lotka-Volterra system \eqref{eq:generalLV},
and let $\calI=\spt\pt$ be its support community.
We say:
\begin{itemize}
\item $\pt$ is \emph{internally stable} if $-\dg{\pt_\calI}\BII$ is stable.
\item $\pt$ is \emph{internally $D$-stable} if $-\BII$ is $D$-stable.
\item $\pt$ is \emph{internally VL-stable} if $-\BII$ is Volterra-Lyapunov stable.
\end{itemize}
\end{definition}
We will also call the \emph{community} $\calI$ internally stable (resp. $D$- or VL-stable)
if it supports some equilibrium $\pt$ 
which is internally stable (resp. $D$- or VL-stable).
Note that if $\pt$ is internally stable, then $\BII$ is nonsingular and thus $\pt$
is the \emph{unique} equilibrium state supported by $\calI$, determined by 
$\pt_\calI=\BII\inv a_\calI$.  

If $\pt$ is internally VL-stable, then it attracts all solutions of \eqref{eq:generalLV}
having the same support community $\calI$.
If $\pt$ is internally ($D$-)stable, it attracts all nearby solutions of \eqref{eq:generalLV}
having the same support community.

These notions of internal stability say nothing about the behavior of solutions
under perturbations which introduce species external to the community 
$\calI$ supporting $\pt$.
Due to the block structure of the linearized system in 
\eqref{e:linear}, this behavior is evidently determined by the 
sign of $(a-B\pt)_i$ for $i\notin\calI$.
It will be convenient to consider this concept for species belonging to 
some given community ${\mathcal Q}\subseteq\Nset$.

\begin{definition}
Let ${\mathcal Q}\subseteq\Nset$, and 
let $\pt$ be an equilibrium for \eqref{eq:generalLV} 
with support community $\calI$ contained in ${\mathcal Q}$.
We say $\pt$ is \emph{$\mathcal{Q}$-stable} if 
$\pt$ is internally stable and 
\begin{equation}\label{d:external}
a_i - (B\pt)_i <0 \quad\text{for all $i\in {\mathcal Q}\setminus\calI$.}
\end{equation}
\end{definition}
Informally, this notion ensures that the (internally stable) community $\calI$ 
that supports $\pt$ is 
stable against (infinitesimal) invasions by other species in ${\mathcal Q}$. 
In particular, if we take ${\mathcal Q}=\Nset$,
it is straightforward to see that we have the following.
\begin{lemma}
Let $\pt$ be an equilibrium state for system \eqref{eq:generalLV}.
Then $\pt$ is strictly stable (i.e., linearly asymptotically stable) if and only if 
it is $\calQ$-stable with $\calQ=\Nset$.
\end{lemma}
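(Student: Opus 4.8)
The plan is to unwind the definitions of both sides and observe that they coincide almost verbatim, using the block structure of the linearized matrix $A$ recorded in \eqref{d:blockL}. First I would recall that by Proposition~\ref{p:easd}, strict stability of $\pt$ is equivalent to stability of $A$, i.e., every eigenvalue of $A$ has negative real part. On the other side, taking $\calQ=\Nset$ in the definition of $\calQ$-stability, the complement $\calQ\setminus\calI$ becomes $\Nset\setminus\calI=\calJ$, so the $\calQ$-stability condition reads: $-\dg{\pt_\calI}\BII$ is stable (internal stability) together with $a_i-(B\pt)_i<0$ for all $i\in\calJ$.

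The key step is to read off the eigenvalues of $A$ from its block-triangular form in \eqref{d:blockL}. Since
\[
A=\begin{pmatrix} -\dg{\pt_\calI}\BII & -\dg{\pt_\calI}\BIJ\\ 0 & \dg{a_\calJ-\BJI\pt_\calI}\end{pmatrix}
\]
is block upper-triangular, its spectrum is the union of the spectrum of the $\calI\calI$ block $-\dg{\pt_\calI}\BII$ and the spectrum of the diagonal $\calJ\calJ$ block $\dg{a_\calJ-\BJI\pt_\calI}$. The eigenvalues of the latter are exactly the diagonal entries $a_i-(\BJI\pt_\calI)_i=a_i-(B\pt)_i$ for $i\in\calJ$ (using $\pt_\calJ=0$, so that $(B\pt)_i=(\BJI\pt_\calI)_i$ for $i\in\calJ$). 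Hence $A$ is stable if and only if both $-\dg{\pt_\calI}\BII$ is stable and each $a_i-(B\pt)_i<0$ for $i\in\calJ$. These are precisely the two conditions defining $\calQ$-stability with $\calQ=\Nset$, so the two notions agree.

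I would then simply assemble this into the equivalence: strict stability $\iff$ $A$ stable $\iff$ (internal stability of $\pt$) and ($a_i-(B\pt)_i<0$ for $i\in\calJ$) $\iff$ $\calQ$-stability with $\calQ=\Nset$. No genuine obstacle arises here; the only point requiring mild care is verifying that $(B\pt)_i=(\BJI\pt_\calI)_i$ for $i\notin\calI$, which follows immediately from $\pt_\calJ=0$. This is why the lemma is stated as ``straightforward,'' and the whole argument is essentially a bookkeeping translation between the spectral description of $A$ and the defining inequalities of $\calQ$-stability.
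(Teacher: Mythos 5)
Your proof is correct and follows exactly the route the paper intends: the paper leaves the lemma as ``straightforward,'' having just noted that the block upper-triangular form \eqref{d:blockL} reduces the question to the stability of $-\dg{\pt_\calI}\BII$ together with the signs of $a_i-(B\pt)_i$ for $i\notin\calI$, which combined with Proposition~\ref{p:easd} is precisely your argument.
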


\section{Exclusion principles for stable communities}

\subsection{Statements of main results}
Recall that a fundamental result from the book of Hofbauer and Sigmund \cite[Sec.~15.3]{HofbauerSigmund}
states that if the full matrix $-B$ is Volterra-Lyapunov stable, then the Lotka-Volterra system
\eqref{eq:generalLV} admits a unique globally stable equilibrium state in
$\RNge$. (Also see \cite{LiuCaiSu2015} in case $B$ is positive definite.)
With weaker conditions on $B$, it becomes possible that the system admits many more stable equilibria,
and this can have interesting consequences for explaining 
the diversity and historical development of ecological communities
\cite{gilpin1976multiple,case1979global,kokkoris1999patterns,goldford2018emergent}.
Thus it is interesting to identify any restriction on the composition of stable communities,
such as a competitive exclusion principle, which may follow from the nature of interspecific interactions.

For example, one result that follows directly from the global stabilty theorem for 
Volterra-Lyapunov stable matrices $-B$ in \cite[Sec.~15.3]{HofbauerSigmund} is the following:
\begin{theorem}\label{t:VLexclusion}
For any community $\calQ\subseteq\Nset$, if the principal submatrix $-B_{\calQ\calQ}$ is Volterra-Lyapunov stable,
then there is a unique equilibrium $\qt\in\RNge$ with support contained in $\calQ$ that is $\calQ$-stable.
This equilibrium $\qt$ attracts all solutions of \eqref{eq:generalLV} with support community $\calQ$.
\end{theorem}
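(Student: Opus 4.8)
The plan is to reduce to the subsystem supported on $\calQ$ and invoke the global stability theorem for Volterra-Lyapunov stable matrices quoted above. First I would consider the Lotka-Volterra system on the species in $\calQ$, with growth vector $a_\calQ$, interaction matrix $B_{\calQ\calQ}$, and state space the closed orthant $\overline{\R^\calQ_+}$. Since $-B_{\calQ\calQ}$ is Volterra-Lyapunov stable, that theorem furnishes an equilibrium $\qt$, with support $\calI:=\spt\qt\subseteq\calQ$ and extended by $\qt_j=0$ for $j\notin\calQ$, which is globally attracting within $\overline{\R^\calQ_+}$. Because supports are time-invariant, a solution of \eqref{eq:generalLV} with support community exactly $\calQ$ satisfies $p_i(t)>0$ for $i\in\calQ$ and $p_j(t)=0$ otherwise for all $t$, hence lives in the open orthant $\R^\calQ_{++}$ and is attracted to $\qt$. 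This is precisely the final attraction assertion of the theorem.

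Next I would check that $\qt$ is $\calQ$-stable. Internal stability is automatic: every principal submatrix of a Volterra-Lyapunov stable matrix is again Volterra-Lyapunov stable, since the inequality $D(-B_{\calQ\calQ})+(-B_{\calQ\calQ})^T D<0$ restricts to the coordinate subspace indexed by $\calI$; thus $-B_{\calI\calI}$ is VL-stable, hence $D$-stable, and scaling by the positive diagonal $\dg{\qt_\calI}$ shows $-\dg{\qt_\calI}B_{\calI\calI}$ is stable. For the external condition I would argue that global attraction forces the saturation inequalities $a_i-(B\qt)_i\le0$ for $i\in\calQ\setminus\calI$: if instead $a_i-(B\qt)_i>0$ for some such $i$, then near $\qt$ one has $p_i'\approx p_i\,(a_i-(B\qt)_i)>0$, so orbits introducing a little of species $i$ grow away from $\qt$, contradicting attraction. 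Upgrading $\le$ to the strict inequality required by $\calQ$-stability is exactly the nondegeneracy (no vanishing transverse eigenvalue) that the paper sets aside; away from that exceptional set the inequalities are strict and $\qt$ is $\calQ$-stable.

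For uniqueness I would show conversely that any $\calQ$-stable equilibrium $\pt$, with support $\calI'\subseteq\calQ$, is locally asymptotically stable for the $\calQ$-subsystem, and then appeal to global attraction of $\qt$. Local stability follows from the block-triangular form of the linearization in \eqref{d:blockL}: internal stability of $\pt$ makes the block indexed by $\calI'$ stable, while the strict external inequalities make the complementary diagonal block, with entries $a_i-(B\pt)_i$ for $i\in\calQ\setminus\calI'$, stable; hence the linearization restricted to $\calQ$ is stable and Proposition~\ref{p:easd} applies within the $\calQ$-subsystem. Since a globally attracting equilibrium cannot coexist with a distinct locally asymptotically stable one---a neighborhood of $\pt$ would fail to be drawn to $\qt$---we conclude $\pt=\qt$.

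I expect the main obstacle to be this passage from the saturation condition $a_i-(B\qt)_i\le0$, which global attraction yields for free, to the strict form demanded by $\calQ$-stability. This is precisely the degenerate zero-eigenvalue case, so a fully rigorous treatment needs either the paper's standing nondegeneracy convention or a separate argument excluding a globally attracting equilibrium that is only weakly saturated. By contrast, the reduction to the $\calQ$-subsystem, the heredity of Volterra-Lyapunov stability to principal submatrices, and the uniqueness via global attraction should all be routine once the quoted theorem is available.
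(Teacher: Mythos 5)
Your proposal follows essentially the same route as the paper: restrict the system to the species in $\calQ$ and invoke the global stability theorem for Volterra--Lyapunov stable interaction matrices, which the paper does in a two-sentence remark pointing to the relative entropy functional $F$ in \eqref{def:Fp}. The details you supply (heredity of VL-stability to principal submatrices, the saturation inequalities, uniqueness via global attraction) are correct, and the strict-saturation issue you flag is exactly the degenerate zero-real-part situation the paper explicitly sets aside.
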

This follows by simply restricting the equations in \eqref{eq:generalLV} to species $i\in\calQ$
and setting $p_j=0$ for $j\notin\calQ$.
In case the equilibrium $\qt$ is given and $\qt_i>0$ for all $i\in\calQ$,
the global stability follows from an argument going back to Volterra~\cite[\S10.2]{volterra1928variations}
using the strict monotonicity of $F(p(t))$ for the relative entropy functional given by
\begin{equation}\label{def:Fp}
F(p) = \sum_{i\in\calQ} d_i(\qt_i \log\frac{\qt_i}{p_i} + p_i - \qt_i),
\end{equation}
with coefficients $d_i>0$ determined by VL-stability. 
See also \cite{goh1977global,LiuCaiSu2015}.

The empirical finding of Lischke and L\"{o}ffler \cite{lischke2017finding},
if valid, would provide another powerful example of an exclusion principle. 
In their extensive computational experiments,  they found (in the present
terminology) that no stable single-species community was ever a
subcommunity of any other stable community.  
As it is easy to check the stability of equilibria supported by a single species,
Lischke and L\"offler could use this principle to greatly simplify the search for 
all stable communities in large systems.

A quite general exclusion principle for stable communities 
of the Lischke-L\"offler type is in fact valid,
under the condition that the interaction matrix $B$ is $D$-symmetrizable. 
\begin{corollary}\label{cor:LL}
Suppose $B$ is $D$-symmetrizable, and $\calI$ is a community supporting a strictly stable equilibrium 
$\pt$ for \eqref{eq:generalLV}.
Then no other community contained in or containing $\calI$ can support a strictly stable
equilibrium $\qt$.
\end{corollary}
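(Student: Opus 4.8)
The plan is to reduce to the case of symmetric $B$ by a diagonal scaling, and then exhibit a single strictly concave potential on the larger community whose unique constrained maximizer is forced to equal \emph{both} equilibria at once. Writing $D$-symmetrizability as $D_1BD_2=S$ with $S=S^T$ and $D_1,D_2>0$ diagonal, I would set $D:=D_2D_1\inv>0$; then $BD=D_1\inv S D_1\inv$ is symmetric. The column scaling $p=D\hat p$ of \eqref{e:Dscale} thus turns the system into one with symmetric interaction matrix $\hat B=BD$ and unchanged growth vector $a$. Since this scaling is a linear diffeomorphism of $\RNge$ that preserves each support community ($p_i>0\iff\hat p_i>0$) and conjugates the linearizations, it preserves strict stability by Proposition~\ref{p:easd}. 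Hence I may assume $B$ is symmetric. Moreover the two alternatives (that $\calI$ contains, or is contained in, the other community) are interchanged by relabeling the two equilibria, so it suffices to rule out the coexistence of a strictly stable $\pt$ on $\calI$ with a strictly stable $\qt$ on some $\mathcal{K}\subsetneq\calI$.

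Next I would set up the variational characterization. With $B$ symmetric, internal stability of $\pt$ means $-\dg{\pt_\calI}\BII$ is stable; since $\pt_\calI>0$, this matrix is similar to the symmetric matrix $-\dg{\pt_\calI}^{1/2}\BII\dg{\pt_\calI}^{1/2}$, so by Sylvester's law of inertia internal stability is equivalent to $\BII$ being positive definite. I then consider the strictly concave quadratic $\Phi(x)=\langle a_\calI,x\rangle-\tfrac12\langle x,\BII x\rangle$ on $x\in\R^\calI$, whose gradient is $a_\calI-\BII x$. Its unique unconstrained maximizer solves $\BII x=a_\calI$, i.e.\ $x=\pt_\calI$, and since $\pt_i>0$ for $i\in\calI$ this maximizer lies in the interior of the nonnegative orthant $O:=\{x\in\R^\calI:x_i\ge0\}$; hence $\pt_\calI$ is the \emph{unique} maximizer of $\Phi$ over $O$.

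The crux is to show that $\qt_\calI:=(\qt_{\mathcal K},0)\in O$, the embedding of $\qt$ into $\R^\calI$ by zeros off $\mathcal K$, is \emph{also} a maximizer of $\Phi$ over $O$. For $i\in\mathcal K$ the equilibrium relation $a_{\mathcal K}=B_{\mathcal{K}\mathcal{K}}\qt_{\mathcal K}$ gives $(\nabla\Phi(\qt_\calI))_i=a_i-(B\qt)_i=0$, while for $i\in\calI\setminus\mathcal K\subseteq\Nset\setminus\mathcal K$ the invasion (external) condition in the definition of strict stability gives $(\nabla\Phi(\qt_\calI))_i=a_i-(B\qt)_i<0$. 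Consequently $\langle\nabla\Phi(\qt_\calI),x-\qt_\calI\rangle\le0$ for every $x\in O$, because on $\mathcal K$ the gradient vanishes and on $\calI\setminus\mathcal K$ it is negative while $x_i-0\ge0$. By concavity this first-order inequality makes $\qt_\calI$ a maximizer of $\Phi$ over $O$. Uniqueness then forces $\qt_\calI=\pt_\calI$, which is impossible since they differ on the nonempty set $\calI\setminus\mathcal K$, where $\pt_i>0=(\qt_\calI)_i$. This contradiction completes the argument, and the containing case follows by the relabeling above.

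The main obstacle is conceptual rather than computational: recognizing that strict stability decomposes exactly into the two ingredients needed, namely positive definiteness of $\BII$ (making $\Phi$ strictly concave with an interior maximizer) and the sign of the invasion coefficients $a_i-(B\qt)_i$ on $\calI\setminus\mathcal K$ (making the embedded smaller equilibrium satisfy the variational inequality for the \emph{same} potential $\Phi$). Once the symmetrizing column scaling is in place, the remainder is routine convex analysis; the one point to state carefully is that the scaling genuinely preserves strict stability and supports, which is where Proposition~\ref{p:easd} is used.
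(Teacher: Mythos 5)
Your proof is correct, but it takes a genuinely different route from the paper's. The paper disposes of Corollary~\ref{cor:LL} in two lines by a dynamical argument: $D$-symmetrizability makes the larger community internally VL-stable, so Theorem~\ref{t:VLexclusion} (via the relative-entropy Lyapunov function $F$ in \eqref{def:Fp}) says its equilibrium attracts \emph{all} solutions supported on that community, which is incompatible with the local attractivity of a different strictly stable equilibrium nested inside it. You instead symmetrize $B$ explicitly by the column scaling $\hat B=BD$ with $D=D_2D_1\inv$ (a correct computation, and the conjugation $\hat A=D\inv AD$ does preserve strict stability), and then run a self-contained convex-analysis argument: positive definiteness of $\BII$ (equivalent to internal stability after symmetrization, by the same Sylvester argument the paper uses in Proposition~\ref{p:clique}) makes $\Phi(x)=\langle a_\calI,x\rangle-\tfrac12\langle x,\BII x\rangle$ strictly concave with unique interior maximizer $\pt_\calI$ on the orthant, while the equilibrium relations and invasion inequalities for the nested equilibrium are exactly the first-order optimality conditions for the same problem, forcing coincidence. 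This is in effect a static, variational repackaging of the bilinear-form computation $K\le 0\le K$ that the paper uses in subsection~\ref{ss:VLstable} to prove the VL-stable case of Theorem~\ref{t:Dexclusion}. What your route buys is a fully elementary, self-contained proof that never invokes global dynamics or the Lyapunov function; what the paper's route buys is brevity plus the stronger dynamical by-product that the larger community's equilibrium globally attracts all orbits with that support. One small presentational point: you only need the variational inequality at the embedded smaller equilibrium, so the strict sign of $a_i-(B\qt)_i$ on $\calI\setminus\mathcal{K}$ is more than you use --- the weak inequality (saturation) would already suffice, which quietly recovers the paper's Remark after the VL-stable proof.
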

\begin{proof}
    Suppose $\pt$ and $\qt$ are both strictly stable and $\calI\subseteq\calQ=\spt\qt$.
Then each is internally stable, and since $B$ is $D$-symmetrizable, each is internally VL-stable. 
In particular, $-B_{\calQ\calQ}$ is VL-stable, so by the Theorem,
$\qt$ attracts all solutions with support community $\calQ$.
But if $\qt\ne\pt$, this contradicts the strict stability of $\pt$, 
which makes $\pt$ locally asymptotically stable in $\R^N$.
\end{proof}

In the terminology introduced at the end of the last section, Corollary~\ref{cor:LL} states that
if $B$ is $D$-symmetrizable, \emph{different stable communities cannot completely overlap.}
This strong subcommunity exclusion principle does not hold in general
in the absence of symmetrizability or any special stability properties. 
However, we find that it does always hold for communities that differ by only one species.

\begin{theorem}\label{t:1exclusion}
No two stable communities can differ by exactly one species. 
I.e., if $\calI\subset\Nset$ and $x\in\Nset\setminus\calI$, then two equilibrium states
with supporting communities $\mathcal{I}$ and $\mathcal{I}\cup\{x\}$ cannot 
both be strictly stable. 
\end{theorem}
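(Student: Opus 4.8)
The plan is to derive a contradiction by comparing the determinant signs forced by internal stability with the external invasion condition imposed on the extra species $x$. Write $\mathcal K=\calI\cup\{x\}$ and decompose $B_{\mathcal K\mathcal K}$ into blocks along $\calI$ and $\{x\}$, so that the equilibrium relations read $a_\calI=\BII\pt_\calI$ for $\pt$, and $a_\calI=\BII\qt_\calI+B_{\calI x}\qt_x$ together with $a_x=B_{x\calI}\qt_\calI+B_{xx}\qt_x$ for $\qt$. By the last Lemma of the previous section, strict stability of each equilibrium is equivalent to $\calQ$-stability with $\calQ=\Nset$; this splits into an internal-stability statement together with the external sign conditions coming from the second diagonal block of $A$ in \eqref{d:blockL}. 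In particular, strict stability of $\pt$ forces $-\dg{\pt_\calI}\BII$ to be stable and, taking the external index to be $x$, also forces $a_x-B_{x\calI}\pt_\calI<0$.

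Next I would convert the two internal-stability statements into determinant signs using the elementary fact that a real stable $n\times n$ matrix has determinant of sign $(-1)^n$ (its real eigenvalues are negative, while complex eigenvalues occur in conjugate pairs contributing positive moduli). Applied to $-\dg{\pt_\calI}\BII$, of size $|\calI|$, and to $-\dg{\qt_{\mathcal K}}B_{\mathcal K\mathcal K}$, of size $|\calI|+1$, and using that the entries of $\pt_\calI$ and $\qt_{\mathcal K}$ are all positive, this yields $\det\BII>0$ and $\det B_{\mathcal K\mathcal K}>0$. Since $\BII$ is therefore invertible, the Schur-complement identity gives $\det B_{\mathcal K\mathcal K}=\det\BII\cdot s$, where the scalar $s=B_{xx}-B_{x\calI}\BII\inv B_{\calI x}$, so that $s>0$.

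The crux is to recognize that this same scalar $s$ governs the external invasion term for $x$ at the equilibrium $\pt$. Solving the first $\qt$-relation for $\qt_\calI=\pt_\calI-\qt_x\,\BII\inv B_{\calI x}$ and substituting into the second, the Schur complement reassembles and one obtains the identity $a_x-B_{x\calI}\pt_\calI=\qt_x\,s$. Since $\qt_x>0$ and $s>0$, the right-hand side is strictly positive, contradicting $a_x-B_{x\calI}\pt_\calI<0$ from the first paragraph. Hence $\pt$ and $\qt$ cannot both be strictly stable.

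I expect the main obstacle to be bookkeeping rather than anything conceptual: one must check that the determinant-sign lemma is applied to the correctly signed, positively scaled matrices, so that the positive diagonal factors $\dg{\pt_\calI}$ and $\dg{\qt_{\mathcal K}}$ do not disturb the sign conclusions, and one must carry the Schur complement cleanly through both the determinant computation and the equilibrium substitution so that its two roles---as the ratio $\det B_{\mathcal K\mathcal K}/\det\BII$ and as the coefficient of $\qt_x$ in $a_x-B_{x\calI}\pt_\calI$---are seen to coincide.
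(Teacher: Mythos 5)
Your proposal is correct and follows essentially the same route as the paper's own proof: the external invasion condition $a_x-B_{x\calI}\pt_\calI<0$ is shown to equal $\qt_x$ times the scalar Schur complement $\BQQ/\BII$, whose sign is then pinned down as $\det\BQQ/\det\BII>0$ via the determinant signs forced by internal stability of the two equilibria. The only cosmetic difference is that you solve for $\qt_\calI$ and substitute, while the paper substitutes $a_\calI=\BII\pt_\calI$ directly; the algebra is identical.
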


Finally, we are able to exclude complete overlap for stable communities
under a weaker assumption than in Theorem~\ref{t:VLexclusion}.
In particular, the assumption that the larger community is internally $D$-stable suffices.

\begin{theorem}\label{t:Dexclusion}
Suppose $\calQ$ is a community supporting an internally $D$-stable equilibrium $\qt$.
Then no subcommunity of $\calQ$ can support any different equilibrium state which is $\calQ$-stable.
\end{theorem}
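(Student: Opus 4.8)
The plan is to argue by contradiction entirely within the community $\calQ$, exploiting that all the relevant data---internal stability of $\pt$ and $\qt$, and the invasion inequalities---involve only indices in $\calQ$. Write $M=\BQQ$, and suppose $\pt$ is a $\calQ$-stable equilibrium with $\spt\pt=\calI\subseteq\calQ$ and $\pt\neq\qt$. Since $\qt$ is internally stable, $\BQQ$ is nonsingular and $\qt$ is the \emph{unique} equilibrium supported on $\calQ$; hence $\pt\neq\qt$ forces $\calI\subsetneq\calQ$, and I set $\mathcal K=\calQ\setminus\calI\neq\emptyset$. The key object is the vector $w=\qt_\calQ-\pt_\calQ$, where $\pt$ is extended by $0$ on $\mathcal K$. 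Using $M\qt_\calQ=a_\calQ$ and $\spt\pt\subseteq\calQ$, one computes $(Mw)_i=a_i-(B\pt)_i$ for each $i\in\calQ$. By the equilibrium condition \eqref{e:eq1} this vanishes for $i\in\calI$, and by the invasion inequality \eqref{d:external} defining $\calQ$-stability it is \emph{strictly negative} for $i\in\mathcal K$. Moreover $w_{\mathcal K}=\qt_{\mathcal K}>0$ because $\mathcal K\subseteq\spt\qt$.

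Next I would eliminate the $\calI$-block. Internal stability of $\pt$ gives that $\BII$ is nonsingular, so the Schur complement $S=M/\BII$ is well defined on the index set $\mathcal K$. Solving the relation $(Mw)_\calI=0$ for $w_\calI$ and substituting into $(Mw)_{\mathcal K}$ yields $S\,w_{\mathcal K}=(Mw)_{\mathcal K}$. Writing $u:=w_{\mathcal K}>0$ and $r:=-(Mw)_{\mathcal K}>0$ (both strictly positive on $\mathcal K$), this reads $Su=-r$. I then pick the positive diagonal matrix $D_{\mathcal K}=\diag(u_i/r_i)$, for which $D_{\mathcal K}Su=-u$; thus $u$ is an eigenvector of $D_{\mathcal K}S$ with eigenvalue $-1$, so $-D_{\mathcal K}S$ has the eigenvalue $+1$. (Equivalently, $u_i(Su)_i<0$ for every $i$ with $u,r>0$, so $S$ fails the sign-reversal characterization of $P$-matrices; this is the point of contact with the $P_0$/$P$-matrix theory underlying $D$-stability.)

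Finally, I must show that this eigenvalue $+1$ is incompatible with $-M$ being $D$-stable, which I expect to be the crux. Consider the positive diagonal family $D(\eps)=\diag(\eps^{-1}I_\calI,\,D_{\mathcal K})$; $D$-stability of $-M$ makes $-D(\eps)M$ stable for every $\eps>0$. A two-time-scale (singular perturbation) analysis of the $\eps^{-1}$ blow-up shows that, as $\eps\to0^+$, exactly $|\mathcal K|$ eigenvalues of $-D(\eps)M$ converge to those of the reduced matrix $-D_{\mathcal K}(M/\BII)=-D_{\mathcal K}S$, while the remaining $|\calI|$ eigenvalues diverge. Continuity of the spectrum then forces every eigenvalue of $-D_{\mathcal K}S$ to have non-positive real part, contradicting the eigenvalue $+1$ produced above; hence no such $\pt$ exists and the equilibrium must be $\qt$. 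The delicate part is precisely this convergence of the slow eigenvalues to the spectrum of the Schur complement under the singular scaling---equivalently, the (weak, closed-left-half-plane) inheritance of $D$-stability by Schur complements, which is what lets the $\calI$-species be integrated out. This inheritance is standard in the fast--slow and $D$-stability literature (cf.\ \cite{cross1978,logofet2005}); the reason internal $D$-stability of $\calQ$---rather than mere stability---is needed is exactly to control this degenerate boundary behavior for \emph{every} rescaling $D_{\mathcal K}$.
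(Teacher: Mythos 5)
Your proposal is correct, and it reaches the paper's key identity by an equivalent computation: your relation $S\,w_{\mathcal K}=(Mw)_{\mathcal K}$ with $w=\qt_\calQ-\pt_\calQ$ is exactly the paper's equation \eqref{e:extstable}, namely $(B/\BII)\qt_\calJ=a_\calJ-\BJI\pt_\calI<0$ componentwise with $\qt_\calJ>0$. Where you diverge is the endgame. The paper stays entirely algebraic: it invokes the fact that $D$-stability of $-\BQQ$ forces $\BQQ$ to be a $P_0$ matrix, proves via the Schur determinant formula that Schur complements of nonsingular $P_0$ matrices are again $P_0$, and then contradicts the Fiedler--Pt\'ak characterization (Theorem~\ref{t: Fiedler}), since $u_i(Su)_i<0$ for \emph{every} $i$. (Your parenthetical remark about the sign-reversal test is precisely this step; note only that the relevant characterization here is of $P_0$ matrices, not $P$ matrices, though your strict inequalities violate both.) You instead manufacture the eigenvalue $-1$ of $D_{\mathcal K}S$ and rule it out by the singular scaling $D(\eps)=\diag(\eps\inv I_\calI, D_{\mathcal K})$ together with the splitting of $\sigma(-D(\eps)\BQQ)$ into fast eigenvalues $\sim\eps\inv\sigma(-\BII)$ and slow eigenvalues converging to $\sigma(-D_{\mathcal K}S)$. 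That splitting is correct and can be proved in two lines from the same Schur determinant identity the paper already records, since $(\eps\lambda I-\BII)\inv\to-\BII\inv$ uniformly for bounded $\lambda$; but as written it is the one step you assert rather than prove, and it is genuinely needed (continuity of the spectrum alone does not locate the slow group). What each route buys: the paper's argument is shorter and purely finite (determinants and one cited matrix-theoretic fact), and yields the stronger Corollary~\ref{c:ssaturated} for $P_0$ matrices with no $D$-stability hypothesis at all; your argument isolates the dynamical content --- that $D$-stability passes, in the closed-half-plane sense, to Schur complements under arbitrary rescaling of the retained block --- which is conceptually illuminating but requires the perturbation lemma to be made rigorous.
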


The proofs of Theorems~\ref{t:1exclusion} and \ref{t:Dexclusion} will be provided in subsections~\ref{ss:sstable} and \ref{ss:intDstable} below.  
The notion of internal $D$-stabilty seems to arise naturally from \eqref{e:linear}--\eqref{d:blockL}, 
since the stability of the block $-\BII$ is unaffected by any positive diagonal scaling.
Despite a long history of investigation, however, computationally effective
criteria that completely characterize $D$-stability are presently known only for $N\le4$ \cite{cross1978,logofet2005}.
One simple criterion that is sufficient to ensure the matrix $-\BII$ is $D$-stable,
though, which follows from Gershgorin's circle theorem,
is the diagonal dominance condition 
\begin{equation}\label{e:ddominance}
B_{ii}>\sum_{j\in\calI\setminus\{i\}} |B_{ij}| \quad\text{for all $i\in\calI$.}
\end{equation}
This condition ensures $-\BII$ is VL-stable also --- see Remark~\ref{r:d2vl} below, 
and \cite[p.~87]{logofet2005} for a more general result.

Theorem~\ref{t:Dexclusion} excludes the complete overlap of a stable community
by any larger internally $D$-stable community, stable or not.
This would appear to support the conjecture of Hofbauer and Sigmund
\cite[p.~200]{HofbauerSigmund} regarding the global stability of an equilibrium 
with full support $\calI=\Nset$ when $-B$ is $D$-stable.  
For if such an equilibrium is not a global attractor in $\RNgt$, then there cannot be any other
strictly stable equilibrium in the system.  
Our present results leave open the possibility, however, that there could be some 
other equilibrium that is degenerately (semi-)stable,
or there could be an open set in $\RNgt$ with non-convergent dynamics.

In the most general case without symmetry, we find that an exclusion principle 
for stable sub- or super-communities does not always hold.
Here is a basic counterexample.
\begin{exmp}[Failure of subcommunity exclusion]\label{ex:fail ex}
One can check that if 
\begin{equation}\label{d:Ba1}
     B= \begin{pmatrix}
        1 & 1 & 1\\
        2 & 1 & 3\\
        3 & 1 & 4
     \end{pmatrix},
     \quad
a=\begin{pmatrix}
        4 \\ 7 \\9
     \end{pmatrix},
\end{equation}
then the two different equilibrium states of \eqref{eq:generalLV} given by
\[
 \pt=\begin{pmatrix}
        4 \\ 0 \\0
     \end{pmatrix},
     \qquad
     \qt=\begin{pmatrix}
        1 \\ 2 \\1
     \end{pmatrix},
\]
with completely overlapping supports, are both strictly stable.
\end{exmp}
A key property of the matrix $B$ in this example is that $-\dg{\qt}B$ is stable but not $D$-stable.
(In particular it is not a $P_0$ matrix, see subsection~\ref{ss:intDstable} below.)
Here the single-species equilibrium $\pt$ is linearly stable in
a strong sense: the matrix $A$ in \eqref{e:linear}--\eqref{d:blockL} is upper triangular with negative diagonal.
The existence of a stable supercommunity is only possible because $B$ is not $D$-stable.
In subsection~\ref{ss:counterex} below we will examine this more carefully 
and show that such examples can be produced for any number of species $N\ge3$.

\subsection{Exclusion for internally VL-stable states}\label{ss:VLstable}
For the convenience of the reader, we prove Theorem~\ref{t:Dexclusion} first in the special case
when the equilibrium $\qt$ is internally VL-stable, i.e., when the principal submatrix $-B_{\calQ\calQ}$ is VL-stable.
Of course, in this case the more general result of Theorem~\ref{t:VLexclusion} holds, but the following proof,
related to the dissipation rate of the Lyapunov function $F(p)$ in \eqref{def:Fp},
is simple and self-contained. 

\begin{proof}[Proof of Theorem~\ref{t:Dexclusion} for internally VL-stable communities]
Let $\calQ\subseteq\Nset$ be a community supporting an internally VL-stable equilibrium $\qt$,
and suppose $\pt$ is a $\calQ$-stable equilibrium with supporting community $\calP=\spt\pt\subseteq\calQ$.
Note that 
\[
a_i - (B\pt)_i 
  \begin{cases}
  =0 &\text{for all $i\in \calP$,}\\ 
  <0 &\text{for all $i\in\calQ\setminus\calP$,}
  \end{cases}
\]
while $a_i-(B\qt)_i=0$ for all $i\in\calQ$.
    Let $D$ be a positive diagonal matrix making the quadratic form of $DB_\mathcal{QQ}$ positive definite,
    and let $d_i=D_{ii}$. Then
    \[
    K := \sum_{i\in\calQ} \Tilde{q}_id_i(a_i - (B\pt)_i) + \sum_{i\in\calQ}\pt_id_i(a_i -(B\Tilde{q})_i) \leq 0,
    \]
   while on the other hand, since $0=\qt_i(a-B\qt)_i=\pt_i(a-B\pt)_i$ for all $i$,
    \begin{align*}
        K &= \sum_{i\in\calQ} \Tilde{q}_id_i((B\Tilde{q})_i - (B\pt)_i) + \sum_{i\in\calQ}\pt_id_i((B\pt)_i -(B\Tilde{q})_i)\\
            &= (\pt -\Tilde{q})_\mathcal{Q}\cdot DB_{\mathcal{QQ}}(\pt -\Tilde{q})_\mathcal{Q} \geq 0.
    \end{align*}
    Thus $\pt = \Tilde{q}$.
    \end{proof}
   \begin{remark}
The same proof also proves that if $-\BQQ$ is any VL-stable principal submatrix of $B$,
then there is at most one equilibrium with supporting community contained in $\calQ$ that 
satisfies the (degenerate) condition $a_i-(B\qt)_i\le0$  for all $i\in\calQ$.  
This statement follows from stronger results proved in \cite[Sec.~15.4]{HofbauerSigmund}.
   \end{remark} 

\subsection{Exclusion for strictly stable states}\label{ss:sstable}
The proofs of Theorems~\ref{t:1exclusion} and \ref{t:Dexclusion} make use of Schur complements.
If $B$ is a square matrix with block representation
\[
        B= \left(
        \begin{array}{cc}
        B_\mathcal{II}  & B_\mathcal{IJ} \\ 
        B_\mathcal{JI} & B_\mathcal{JJ}
        \end{array}
        \right),
\]
and $B_\mathcal{II}$ is {invertible}, the \emph{Schur complement of $B_\mathcal{II}$ in $B$} is defined by 
\[
B/B_\mathcal{II} := B_\mathcal{JJ}-B_\mathcal{JI}B_\mathcal{II}^{-1}B_\mathcal{IJ}.
\]
Denoting the identity matrix by $I$, block row operations yield
        \[
        \left(
        \begin{array}{cc}
        I  & 0 \\ 
        -B_\mathcal{JI} B_\mathcal{II}^{-1} & I
        \end{array}
        \right)
        \left(
        \begin{array}{cc}
         B_\mathcal{II} & B_\mathcal{IJ} \\ 
        B_\mathcal{JI} & B_\mathcal{JJ}
        \end{array}
        \right)=
        \left(
        \begin{array}{cc}
        B_\mathcal{II} & B_\mathcal{IJ} \\
        0 & B/\BII 
        \end{array}
        \right),
        \]
evidently the \emph{Schur determinant formula} holds:
        \[
        \text{det}B=\text{det}B_\mathcal{II}\;\text{det}(B/B_\mathcal{II}).
        \]

\begin{proof}[Proof of Theorem~\ref{t:1exclusion}]
Let $\pt$ and $\qt$ be strictly stable equilibria for the Lotka-Volterra system~\eqref{eq:generalLV}
with respective support communities $\calI$ and $\calQ=\calI\cup\{x\}$, where $x\notin\calI$.
Note that $a_\calI=\BII\pt_\calI$, and due to the external stability condition \eqref{d:external}, 
\[
0> a_x-(B\pt)_x = a_x - B_{xI}\BII\inv a_\calI.
\]
Since $a_\calQ=\BQQ\qt_\calQ$, this is equal to 
\[
(B_{x\calI}\qt_\calI+B_{xx}\qt_x) - B_{x\calI}\BII\inv(\BII\qt_\calI+B_{\calI x}\qt_x) = (\BQQ/\BII)\qt_x.
\]
Thus $0>(\BQQ/\BII) = (\det\BQQ)/(\det\BII)$. The internal stability of $\pt$ and $\qt$ 
imply that all the eigenvalues of the matrices $\dg{\pt_\calI}\BII$ and $\dg{\qt_\calQ}\BQQ$
have positive real part,
hence both $\det\BQQ$ and $\det\BII$ are positive.
This yields a contradiction.
\end{proof}

\subsection{Proof for internally \texorpdfstring{$D$}{D}-stable states}\label{ss:intDstable}
A key ingredient in our proof of Theorem~\ref{t:Dexclusion} is that $D$-stable matrices enjoy a property which
behaves nicely under restriction to principal submatrices and their Schur complements.
Firstly, it is known \cite[p.~256]{cross1978} that for any $D$-stable matrix $A$, $-A$ has the following $P_0$ property.
    \begin{definition}
    $A\in \R^{N\times N}$ is a $P_0$ matrix if every principal minor of $A$ is nonnegative.
\end{definition}

\begin{lemma}
Schur complements in any nonsingular $P_0$ matrix are also $P_0$. 
\end{lemma}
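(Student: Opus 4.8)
The plan is to reduce each principal minor of the Schur complement $S := A/A_{\calI\calI}$ to a ratio of two principal minors of the original matrix $A$, both of which the $P_0$ hypothesis controls. Write $\Nset = \calI\cup\calJ$ for the index partition defining the block structure, so that $S$ is indexed by $\calJ$ and is well defined precisely because $A_{\calI\calI}$ is invertible. To verify that $S$ is $P_0$ I must show $\det S_{\mathcal{K}\mathcal{K}}\ge 0$ for every $\mathcal{K}\subseteq\calJ$, where $S_{\mathcal{K}\mathcal{K}}$ denotes the principal submatrix of $S$ on the rows and columns indexed by $\mathcal{K}$.

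The key step is the nested (or quotient) property of Schur complements. For each $\mathcal{K}\subseteq\calJ$, set $\mathcal{L} := \calI\cup\mathcal{K}$; I claim that the principal submatrix $S_{\mathcal{K}\mathcal{K}}$ coincides with the Schur complement of $A_{\calI\calI}$ taken inside the larger principal submatrix $A_{\mathcal{L}\mathcal{L}}$. This is immediate on comparing formulas: both equal $A_{\mathcal{K}\mathcal{K}} - A_{\mathcal{K}\calI}A_{\calI\calI}\inv A_{\calI\mathcal{K}}$, since the $\mathcal{K}$-block of $S = A_{\calJ\calJ} - A_{\calJ\calI}A_{\calI\calI}\inv A_{\calI\calJ}$ simply selects the rows and columns indexed by $\mathcal{K}$. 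With this identification in hand, the Schur determinant formula applied to $A_{\mathcal{L}\mathcal{L}}$ gives $\det A_{\mathcal{L}\mathcal{L}} = \det A_{\calI\calI}\cdot\det S_{\mathcal{K}\mathcal{K}}$, hence $\det S_{\mathcal{K}\mathcal{K}} = \det A_{\mathcal{L}\mathcal{L}}/\det A_{\calI\calI}$.

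It then remains to read off the signs. The numerator $\det A_{\mathcal{L}\mathcal{L}} = \det A_{(\calI\cup\mathcal{K})(\calI\cup\mathcal{K})}$ is a principal minor of $A$, so it is nonnegative because $A$ is $P_0$. The denominator $\det A_{\calI\calI}$ is likewise a principal minor, hence also nonnegative; but $A_{\calI\calI}$ is invertible, so in fact $\det A_{\calI\calI} > 0$. Therefore $\det S_{\mathcal{K}\mathcal{K}}$ is a quotient of a nonnegative number by a positive one, so $\det S_{\mathcal{K}\mathcal{K}}\ge 0$, as needed. The only substantive point is the nested-Schur-complement identity of the second paragraph; once it is recorded, the remainder is pure bookkeeping with the determinant formula together with the sign constraints imposed by the $P_0$ property.
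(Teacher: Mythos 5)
Your proof is correct and follows essentially the same route as the paper's: identify each principal submatrix of the Schur complement as a nested Schur complement $B_{\mathcal{I}'\mathcal{I}'}/B_{\calI\calI}$ with $\mathcal{I}'=\calI\cup\mathcal{K}$, then apply the Schur determinant formula and the $P_0$ sign constraints. You additionally make explicit the (correct) observation that invertibility of $B_{\calI\calI}$ upgrades $\det B_{\calI\calI}\ge 0$ to $\det B_{\calI\calI}>0$, a point the paper leaves implicit.
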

\begin{proof}
Let $B$ be an $N\times N$ nonsingular $P_0$ matrix with block representation as above. In order to consider principal submatrices of $B/B_\mathcal{II}$, choose $\mathcal{K}\subseteq \mathcal{J}$ and let $\mathcal{I'}=\calI\cup\mathcal{K}$. Then $(B/B_\mathcal{II})_\mathcal{KK}= (B_\mathcal{JJ}-B_{\mathcal{JI}}B_\mathcal{II}^{-1}B_{\mathcal{IJ}})_\mathcal{KK}=B_\mathcal{KK}-B_{\mathcal{KI}}B_\mathcal{II}^{-1}B_{\mathcal{IK}}=B_\mathcal{I'I'}/B_\mathcal{II}$. Thus any principal matrix of a Schur complement can be represented as a Schur complement. Then, by the determinant formula,
\[
   \det(B/B_\mathcal{II})_\mathcal{KK}= \det(B_\mathcal{I'I'}/B_\mathcal{II})= \det B_\mathcal{I'I'}/\det B_\mathcal{II} \geq 0.
\]
\end{proof}

We will also make use of the following characterization of $P_0$ matrices, observed by Fiedler and Pt\'ak \cite{fiedler1966}.
\begin{theorem}{(Fiedler \& Pt\'ak, 1966)}\label{t: Fiedler}
    $A\in \R^{N\times N}$ is a $P_0$ matrix if and only if for any nonzero $x\in \R^N$, there exists an index $i$ such that $x_i\not=0$ and $x_i(Ax)_i \geq 0$.
\end{theorem}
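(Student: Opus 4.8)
The plan is to recast the statement as a sign-reversal property and prove the two implications by contraposition. Call a nonzero $x$ a \emph{strict reverser} for $A$ if $x_i(Ax)_i<0$ for every $i\in\spt x$; the asserted condition is exactly the nonexistence of a strict reverser, so I must show that $A$ is $P_0$ if and only if $A$ admits no strict reverser. Two reductions get used throughout. First, if $x$ has support $S=\spt x$ then $(Ax)_i=(A_{SS}x_S)_i$ for $i\in S$, so every relevant product $x_i(Ax)_i$ is computed inside the principal submatrix $B:=A_{SS}$, and a negative principal minor of $B$ is also one of $A$. Second, with $\Sigma=\diag(\operatorname{sgn}(x_i))_{i\in S}$ the passage to $C:=\Sigma B\Sigma$ and $u:=\Sigma x_S>0$ is a signature similarity: it fixes every principal minor ($\det C_{\beta\beta}=\det B_{\beta\beta}$) and preserves $u_i(Cu)_i=x_i(Ax)_i$. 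Hence on its support I may always assume the test vector is strictly positive while the matrix stays $P_0$ precisely when $B$ does.

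For the easy implication (no strict reverser $\Rightarrow P_0$) I prove the contrapositive. If $A$ is not $P_0$, pick $\alpha$ with $\det A_{\alpha\alpha}<0$ and set $B=A_{\alpha\alpha}$. The monic polynomial $t\mapsto\det(B+tI)$ is negative at $t=0$ and tends to $+\infty$, so it has a root $t^*>0$; any $0\ne x\in\ker(B+t^*I)$ satisfies $(Bx)_i=-t^*x_i$, whence $x_i(Bx)_i=-t^*x_i^2<0$ for each $i\in\spt x$. Extending $x$ by zeros off $\alpha$ yields a strict reverser for $A$. This direction needs nothing beyond the intermediate value theorem.

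The substantive implication is $P_0\Rightarrow$ no strict reverser, again by contraposition: from a strict reverser I must produce a negative principal minor. After the two reductions, together with a further positive column scaling $C\mapsto C\diag(u)$ (which multiplies each principal minor by a positive factor and sends $Cu$ to $C\one$), the claim becomes the Core Lemma: a matrix $C$ with all row sums negative, $C\one<0$, must have a strictly negative principal minor, i.e.\ a $P_0$ matrix cannot have all row sums negative. I would argue by induction on the size $k$, assuming $C\in P_0$ for contradiction, so every $C_{ii}\ge0$. A zero diagonal entry is handled by a short separate argument; otherwise each $C_{ii}>0$ and one attempts to drop the dimension by a Schur complement on an index $\ell$, using the lemma that Schur complements of $P_0$ matrices are $P_0$ and the Schur determinant formula $\det C_{\calK\cup\{\ell\}}=C_{\ell\ell}\det(C/C_{\ell\ell})_{\calK}$, so that a negative minor of the $(k-1)$-dimensional Schur complement lifts to one of $C$. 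This pivot succeeds exactly when some index keeps the reduced right-hand side strictly negative; tracking the row sums shows failure occurs only in a configuration forcing a cyclic chain $v_{\ell_{t+1}}\ge (C_{\ell_{t+1}\ell_t}/C_{\ell_t\ell_t})v_{\ell_t}$ with $C_{\ell_{t+1}\ell_t}>0$, and multiplying around the cycle $\gamma$ gives $\prod_t C_{\ell_t\ell_t}\le\prod_t C_{\ell_{t+1}\ell_t}$, which pushes a principal minor supported on $\gamma$ to be nonpositive.

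I expect the main obstacle to be precisely this last step: upgrading the cyclic inequality to a genuinely \emph{negative} principal minor, since for cycles of length $\ge3$ the minor $\det C_{\gamma\gamma}$ carries permutation terms beyond the diagonal and the single $\gamma$-cycle. I see two ways to close the gap. One exploits the strictness of $C\one<0$ so that the induction terminates either in a dimension drop or in a length-two cycle, where $\det C_{\gamma\gamma}=C_{\ell_1\ell_1}C_{\ell_2\ell_2}-C_{\ell_1\ell_2}C_{\ell_2\ell_1}\le0$ is immediate. The other, cleaner in spirit, perturbs into the strict class: because $C\in P_0$ forces each $C_{\beta\beta}$ to have characteristic polynomial with sign-alternating coefficients and hence no negative real eigenvalue, one gets $C+\epsilon I\in P$ for all $\epsilon>0$, with $(C+\epsilon I)\one<0$ for small $\epsilon$; a theorem of the alternatives (Gordan) then reduces the Core Lemma to the statement that every $P$-matrix maps some nonzero nonnegative vector into the nonnegative orthant, which the same pivoting induction can attack in the strictly signed setting.
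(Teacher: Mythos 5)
The paper does not actually prove this theorem---it is quoted from Fiedler and Pt\'ak with a citation---so there is no in-paper argument to compare against; I evaluate your proposal on its own. Your reductions (restriction to the support, signature similarity, positive column scaling) and your proof of the direction ``no strict reverser $\Rightarrow P_0$'' via the intermediate value theorem are correct. The substantive direction, however, is not proved. You reduce it correctly to the Core Lemma that a $P_0$ matrix $C$ cannot satisfy $C\one<0$, but the pivoting/cyclic-chain induction you sketch stalls exactly where you say it does: the inequality $\prod_t C_{\ell_t\ell_t}\le\prod_t C_{\ell_{t+1}\ell_t}$ compares the diagonal term of $\det C_{\gamma\gamma}$ with a single permutation term, which does not control the minor when the cycle has length $\ge 3$, and even for a $2$-cycle it only yields $\det C_{\gamma\gamma}\le 0$, which is consistent with $P_0$ (a strict inequality is needed for a contradiction). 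Neither proposed repair closes this: the first asserts without justification that the induction must terminate in a dimension drop or a $2$-cycle, and the second defers to the unproved claim that every $P$-matrix maps some nonzero nonnegative vector into the nonnegative orthant---itself a nontrivial theorem of comparable depth---so the difficulty is only displaced.

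The gap has a short classical fix that replaces the entire induction. Suppose $A$ is $P_0$ and $x\ne 0$ satisfies $x_i(Ax)_i<0$ for every $i\in\spt x$. Restrict to $S=\spt x$ and set $B=A_{SS}$, which is again $P_0$ and satisfies $(Bx_S)_i=(Ax)_i$ for $i\in S$. Put $d_i=-(Bx_S)_i/x_i>0$ and $D=\diag(d_i)_{i\in S}$; then $(B+D)x_S=0$ with $x_S\ne0$, so $\det(B+D)=0$. But the multilinear expansion
\[
\det(B+D)=\sum_{\beta\subseteq S}\det B_{\beta\beta}\prod_{i\in S\setminus\beta}d_i\;\ge\;\prod_{i\in S}d_i>0,
\]
since every principal minor of $B$ is nonnegative and the $\beta=\emptyset$ term is already positive. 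This contradiction finishes the hard direction without any signature similarity, column scaling, or Schur pivoting; you may keep your (correct) eigenvalue argument for the converse.
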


\begin{proof}[Proof of Theorem~\ref{t:Dexclusion}]
Without loss of generality, we assume that $\qt$ is an internally $D$-stable equilibrium with full support 
$\calQ=\Nset$, meaning $-B$ is $D$-stable. Suppose also that $\pt$ is a strictly stable equilibrium with
smaller support $\calI$. 
Now we can analyze the external stability condition \eqref{d:external} for $\pt$ using $B$ and $\qt$ as follows. 
For notational simplicity we let $\mathcal{J}=\mathcal{I}^c=\Nset\setminus\calI$.
We have 
\begin{align}
        a-B\pt &=
            \left( \begin{array}{c}
        a_\mathcal{I} \\
        a_\mathcal{J}
        \end{array} \right) -
        \left( \begin{array}{cc}
        B_\mathcal{II} & B_\mathcal{IJ} \\ 
        B_\mathcal{JI} & B_\mathcal{JJ}
        \end{array} \right)
        \left( \begin{array}{c}
        \pt_\mathcal{I}\\
        0
        \end{array} \right) 
        = \left( \begin{array}{c}
        0\\
        a_\mathcal{J} - B_{\mathcal{JI}}\pt_\mathcal{I}
        \end{array} \right),
        \end{align}
hence $a_\calJ-\BJI\pt_\calI = a_\calJ-\BJI\BII\inv a_\calI$.  Since $a=B\qt$, however, we can write  
\[
a_\calJ = \BJI\qt_I+\BJJ\qt_J, \qquad 
a_\calI = \BII\qt_I+\BIJ\qt_J, 
\]
and deduce from the external stability condition \eqref{d:external} that, componentwise,
        \begin{align}\label{e:extstable}
        0>   a_\mathcal{J} - B_{\mathcal{JI}}\pt_\mathcal{I} &= (B_\mathcal{JJ} - B_{\mathcal{JI}}B_\mathcal{II}^{-1}B_{\mathcal{IJ}}) \qt_\mathcal{J}
= (B/B_\mathcal{II}) \qt_\mathcal{J} ,
        \end{align}
where $B/B_\mathcal{II}$ is the Schur complement of $B_\mathcal{II}$ in $B$.
But since $B/B_\mathcal{II}$ inherits the $P_0$ property from $B$, this contradicts Theorem~\ref{t: Fiedler}.
\end{proof}

This argument yields a result that differs in a rather subtle way from 
conclusions implied by Theorem 15.4.5 in the book of Hofbauer and Sigmund  \cite{HofbauerSigmund}. 
This theorem states that $B$ is a $P$ matrix (meaning all its principal minors are positive)
if and only if \emph{for every $a\in\R^N$},
the system \eqref{eq:generalLV} has a unique equilibrium  $\pt\in\RNge$
which is ``saturated,'' meaning $a_i-(B\pt)_i\le0$ for all $i$.
Any strictly stable equilibrium is strictly saturated, so  
it follows from \cite[Thm.~15.4.5]{HofbauerSigmund} 
that if $\calQ\subseteq\Nset$ and $\BQQ$ is a $P$-matrix, 
then at most one subcommunity of $\calQ$  can be $\calQ$-stable.

The same proof as that of Theorem~\ref{t:Dexclusion} above establishes the following 
related exclusion principle, which relaxes the assumption on strict positivity of minors
while strengthening the saturation (exterior stability) condition.
If $\calI\subseteq\calQ\subseteq\Nset$, let us call an equilibrium $\pt$ with support $\calI$
\emph{strictly $\calQ$-saturated} if  $\BII$ is nonsingular and $a_i-(B\pt)_i<0$ for all $i\in\calQ\setminus\calI$.

\begin{corollary}\label{c:ssaturated}
If a community $\calQ\subseteq\Nset$ supports an equilibrium $\qt\in\RNge$ and
$\BQQ$ is a $P_0$ matrix, then no different equilibrium $\pt$ supported inside $\calQ$ can be 
strictly $Q$-saturated. In particular, no $\pt\ne\qt$ can be $\calQ$-stable.
\end{corollary}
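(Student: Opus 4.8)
The plan is to reuse the proof of Theorem~\ref{t:Dexclusion} almost verbatim, making only two bookkeeping substitutions: the $P_0$ property of $\BQQ$ is now \emph{assumed} outright rather than deduced from $D$-stability of $-\BQQ$, and the strict exterior inequality is supplied by the hypothesis of strict $\calQ$-saturation rather than by strict stability. As there, I would first reduce to the case $\calQ=\Nset$ by restricting the system to the species of $\calQ$; this is harmless since every relevant support lies in $\calQ$, and the restricted system is again Lotka-Volterra with matrix $\BQQ$ and growth vector $a_\calQ$. After this reduction $B=\BQQ$ is $P_0$, the equilibrium $\qt$ has full support so that $a=B\qt$ holds identically, and $\pt$ is strictly saturated with support $\calI$ and nonsingular $\BII$.

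Writing $\calJ=\Nset\setminus\calI$, the equilibrium identity $a_\calI=\BII\pt_\calI$ and $\pt_\calJ=0$ give $(a-B\pt)_\calJ=a_\calJ-\BJI\BII\inv a_\calI$; substituting $a_\calI=\BII\qt_\calI+\BIJ\qt_\calJ$ and $a_\calJ=\BJI\qt_\calI+\BJJ\qt_\calJ$ (valid because $a=B\qt$) collapses this to the Schur complement, exactly as in \eqref{e:extstable}, so that strict saturation reads componentwise
\[
0>(a-B\pt)_\calJ=(B/\BII)\qt_\calJ.
\]
In the nontrivial case $\calI\subsetneq\calQ$ the set $\calJ$ is nonempty and $\qt_\calJ$ is strictly positive, hence nonzero; applying the Fiedler--Pt\'ak characterization (Theorem~\ref{t: Fiedler}) to $\qt_\calJ$ and $B/\BII$ then produces an index $i$ with $(\qt_\calJ)_i\ne0$ and $(\qt_\calJ)_i\big((B/\BII)\qt_\calJ\big)_i\ge0$, contradicting the strict negativity above. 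The degenerate case $\calI=\calQ$ is immediate: strict saturation requires $\BII=\BQQ$ to be nonsingular, whence $\calQ$ supports a unique full-support equilibrium and $\pt=\qt$. The final ``in particular'' clause follows because any $\calQ$-stable $\pt$ is internally stable (so $\BII$ is nonsingular) and meets the strict exterior inequality, i.e.\ it is strictly $\calQ$-saturated.

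The one step that needs genuine care --- and the main obstacle --- is that here $\BQQ$ is only $P_0$ and may be \emph{singular}, so the Lemma that Schur complements of nonsingular $P_0$ matrices are $P_0$ does not apply off the shelf. I would resolve this by observing that the Lemma's proof uses only the nonsingularity of the complemented block: for $\mathcal{K}\subseteq\calJ$ and $\mathcal{I'}=\calI\cup\mathcal{K}$ one has $\det(B/\BII)_\mathcal{KK}=\det B_\mathcal{I'I'}/\det\BII$, whose numerator is a principal minor of the $P_0$ matrix $B$, hence $\ge0$, and whose denominator is positive because $\BII$ is nonsingular while $\det\BII\ge0$. Thus $B/\BII$ remains $P_0$ even when $B$ itself is singular, which is precisely what the Fiedler--Pt\'ak step requires; no idea beyond those in the proof of Theorem~\ref{t:Dexclusion} is needed once this is noted.
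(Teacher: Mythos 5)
Your proof is correct and follows essentially the same route as the paper, which simply asserts that the argument of Theorem~\ref{t:Dexclusion} carries over verbatim. Your added observation---that the Schur-complement Lemma still applies when $\BQQ$ is a possibly \emph{singular} $P_0$ matrix, since its proof only needs $\det\BII>0$ (guaranteed here by nonsingularity of $\BII$ together with the $P_0$ property)---is a worthwhile clarification of a point the paper leaves implicit.
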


The internal $D$-stability condition in Theorem~\ref{t:Dexclusion} 
is in principle weaker than the VL-stability condition in Theorem~\ref{t:VLexclusion}.
As we have indicated, it is not known how to verify $D$-stability computationally in every case where it is true, when $N>4$.
In contrast, the assumptions in both Corollary \ref{c:ssaturated} and \cite[Thm.~15.4.5]{HofbauerSigmund} can in principle be checked by computing sufficiently many principal minors. 
In practice, though, the number of minors involved may become prohibitively large if many species are considered.

\subsection{Counterexamples to  exclusion in competitive systems}\label{ss:counterex}
Equations \eqref{eq:generalLV} model purely competitive interactions if all entries of the matrix $B$ are positive.
Lischke and L\"offler \cite{lischke2017finding} state that in their extensive simulations of random competitive Lotka-Volterra systems, they never encountered a case where a single species formed a stable subcommunity of a larger one. 
Example~\ref{ex:fail ex} shows that this is not a universal property that holds for all competitive systems,
but the results of \cite{lischke2017finding} suggest that encountering counterexamples may be a rare event.
In this subsection we show that one can invent such counterexamples in systems of any size $N\ge3.$

For a single-species community $\calI$ to be stable and contained in a larger one $\calQ$, 
necessarily $-\BII<0$, and $-\BQQ$ must be stable but not $D$-stable. 
For definiteness  we set $\calI=\{1\}$, $\calQ=\Nset$ and $\calJ=\calQ\setminus\calI$.

A matrix $B=\BQQ$, with Schur complement $C=B/\BII$,  
might have these properties if $B_{11}>0$ and $C$ has some negative diagonal element 
(implying $C$ is not a $P_0$ matrix).
We can seek $B$ in the block form
\begin{equation}
    B = \begin{pmatrix} b & r^T \\ c & C+cr^T/b \end{pmatrix},
\end{equation}
where $b>0$ and $c,r\in\R^{N-1}$ have positive entries. (Note $C=B/\BII$ here.)

In order for a state  $\qt=(\qt_1,\qt_\calJ)^T>0$ 
to be a strictly stable equilibrium,
we require $a=B\qt$ and all eigenvalues of $\dg{\qt} B$ to have positive real parts. 
Then in order for $\pt=(\pt_1,0)^T$ to be strictly stable, it suffices 
by \eqref{e:extstable} that 
\begin{equation}\label{e:psstable}
b\pt_1=a_1=b\qt_1+r^T\qt_\calJ 
\quad\text{and}\quad
a_\calJ -\BJI \pt_\calI = C \qt_\calJ <0.
\end{equation}

In Example~\ref{ex:fail ex} these conditions all hold --- e.g.,
\[
C\qt_\calJ= \begin{pmatrix}-1 &1\\-2&1\end{pmatrix}
 \begin{pmatrix}2 \\1\end{pmatrix}<0.
\]
To construct examples for any $N\ge3$, it is convenient to choose $C$ to make $B$ a rank-2 perturbation of $\eps I$ for small $\eps>0$. That is, we seek to make 
\begin{equation}
B=\eps I + vw^T+\hat v\hat w^T, \quad 
\end{equation}
where the vectors $v,w,\hat v,\hat w\in\R^N$ have the block form
\[
v=\pmat{1\\c}, \quad
w=\pmat{1\\r}, \quad
\hat v=\pmat{0\\-\hat c}, \quad
\hat w=\pmat{0\\\hat r}. 
\]
In this case $b=1+\eps$ and the Schur complement  
\[
  C=-\hat c\hat r^T + \eps(I+cr^T/b).
\]
The matrix $B$ has the eigenvalue $\eps>0$ with multiplicity $N-2$, since any vector orthogonal to both $w$ and $\hat w$
is an eigenvector. It is straightforward to show that the two remaining eigenvalues
must take the form $\eps +\lambda$ where $\lambda$ is an eigenvalue of the $2\times2$ matrix
\begin{equation}\label{d:Mmat}
M = \pmat{w^Tv &w^T\hat v \\ \hat w^T v & \hat w^T\hat v} .
\end{equation}
With the specific choices 
\[
r = (1,1,\ldots,1)^T,
\quad c = 3r,
\quad \hat r = (1,0,\ldots,0)^T, 
\quad \hat c = 2r-\hat r ,
\]
we find 
\[
M = \pmat{1+3m & 1-2m \\ 3 & -1 }, \qquad m=N-1.
\]
The eigenvalues of $M$ have positive real part for all  $N\ge3$,
since then $M$ has positive trace $3m$ and determinant $3m-4$.

Thus the matrix $-B$ is stable.
With the choices $\qt=(1,1,\ldots,1)^T$, $a=B\qt$, the state
$\qt$ becomes a strictly stable equilibrium.  
With $\pt_1=a_1/(1+\eps)$, the state 
$\pt=(\pt_1,0)^T$ then satisfies \eqref{e:psstable} for sufficiently small 
$\eps>0$, since 
\[
C\qt_\calJ = -\hat c + O(\eps) = (-1,-2,\ldots,-2)^T + O(\eps)<0.
\]
Thus the single-species equilibrium  $\pt$ is also strictly stable for small $\eps>0$.

\section{Bounds from Sperner's lemma}\label{ss:sperner}

The exclusion principles of the previous section  imply bounds on the number of 
stable communities of certain types,
by a well-known result from the combinatorial theory of \emph{posets} (partially ordered sets).
 A {poset} is a set $P$ with a binary relation $\leq$ satisfying reflexivity ($a\leq a$), antisymmetry (if $a\leq b$ and $b\leq a$, then $a=b$) and transitivity (if $a\leq b$ and $b\leq c$, then $a\leq c$).
Two elements $a$ and $b$ in $P$ are \emph{comparable} if $a\leq b$ or $b\leq a$.
        A \emph{chain} in $P$ is a subset $\calC\subseteq P$ such that any two elements in $\calC$ are comparable.
        \begin{definition}
        An {\em anti-chain} in a poset $P$ is a subset $\calA \subseteq P$ such that no two elements in $\calA$ are comparable.
    \end{definition}
    
For any set $S$, the collection of all subsets of $S$ ordered by inclusion is a poset, 
denoted by $(\calP(S),\subseteq)$. 
For $S=\{1,2,3\}$, e.g.,
        the collection $\big\{\emptyset, \{1\}, \{1,2\}, \{1,2,3\}\big\}$ is a chain and $\big\{\{1,2\},\{2,3\},\{1,3\} \big\}$ is an anti-chain.
The maximal size of any anti-chain in a finite poset is bounded by the following well-known result of Sperner. See \cite{lubell1966} for a short proof.
    \begin{lemma}[Sperner's lemma]
        Let $\calA$ be an anti-chain in a poset $P$ having N elements. Then the number of elements of $\calA$ is at most  $\binom{N}{\lfloor N/2\rfloor}$.
    \end{lemma}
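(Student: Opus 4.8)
The plan is to prove the Boolean-lattice version of Sperner's theorem that is implicitly intended here: that an anti-chain $\calA$ in the power-set poset $(\calP(S),\subseteq)$ of an $N$-element set $S$ has at most $\binom{N}{\lfloor N/2\rfloor}$ members. (It is worth flagging that the statement must refer to this power-set poset introduced just above, not to an arbitrary poset with $N$ elements, for which the bound is false—the anti-chain poset on $N$ points is itself an anti-chain of size $N$.) I would follow Lubell's double-counting argument, which proves the stronger Lubell--Yamamoto--Meshalkin (LYM) inequality and then yields the stated bound as an immediate corollary.

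First I would count the \emph{maximal chains} in $\calP(S)$: these are the saturated chains $\emptyset = C_0 \subset C_1 \subset \cdots \subset C_N = S$ with $|C_k|=k$, and each such chain is determined by the permutation recording the order in which the elements of $S$ are adjoined, so there are exactly $N!$ of them. Next, for a fixed subset $A$ with $|A|=k$, I would count the maximal chains passing through $A$: such a chain is built by independently ordering the $k$ elements of $A$ and the $N-k$ elements of $S\setminus A$, giving $k!\,(N-k)!$ chains through $A$.

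The crucial step invokes the anti-chain hypothesis. Any two distinct sets lying on a common maximal chain are comparable, so no maximal chain can meet two distinct members of the anti-chain $\calA$. Hence the families of maximal chains through the various $A\in\calA$ are pairwise disjoint, and the sum of their sizes cannot exceed the total count $N!$:
\[
\sum_{A\in\calA} |A|!\,(N-|A|)! \le N!.
\]
Dividing by $N!$ yields the LYM inequality $\sum_{A\in\calA} \binom{N}{|A|}^{-1} \le 1$.

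Finally, I would compare each binomial coefficient with its largest value: since $\binom{N}{k}\le\binom{N}{\lfloor N/2\rfloor}$ for every $k$, each summand satisfies $\binom{N}{|A|}^{-1}\ge \binom{N}{\lfloor N/2\rfloor}^{-1}$, so the LYM inequality forces $|\calA|\le \binom{N}{\lfloor N/2\rfloor}$, which is the claim. The only genuine obstacle is the conceptual one of recognizing that the anti-chain property translates precisely into disjointness of the chain-families; everything else is elementary enumerative counting.
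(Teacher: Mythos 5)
Your proof is correct and is precisely Lubell's double-counting argument via the LYM inequality, which is the proof the paper itself points to (it gives no proof but cites Lubell 1966). Your side remark that the bound holds for the power-set poset $(\calP(S),\subseteq)$ of an $N$-element set $S$, rather than for an arbitrary $N$-element poset as the statement literally reads, is also accurate and worth noting.
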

    
From Theorem \ref{t:Dexclusion} we directly infer the following.

\begin{corollary}\label{c:ds bounds}
For any Lotka-Volterra system \eqref{eq:generalLV}, no two stable subcommunities of $\Nset=\{1,2,\ldots,N\}$ that are internally $D$-stable are comparable with respect to inclusion.
The number of strictly stable equilibria that are internally $D$-stable is therefore at most 
$\binom{N}{\lfloor N/2\rfloor}$.
    \end{corollary}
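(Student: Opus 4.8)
The plan is to obtain the non-comparability statement as a direct consequence of Theorem~\ref{t:Dexclusion} and then to convert it into a cardinality bound via Sperner's lemma. Before either step, I would reduce the counting problem to a question about supports. Internal $D$-stability of a community $\calI$ means $-\BII$ is $D$-stable, hence stable, hence nonsingular, so $\calI$ carries a \emph{unique} equilibrium, determined by $\pt_\calI=\BII\inv a_\calI$. Consequently the map sending a strictly stable, internally $D$-stable equilibrium to its support community is injective, and the number of such equilibria equals the number of their support communities. It therefore suffices to bound the cardinality of the family $\calA\subseteq\calP(\Nset)$ consisting of those communities that are simultaneously stable and internally $D$-stable.

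Next I would show that $\calA$ is an anti-chain. Suppose, for contradiction, that two of its members are comparable, say $\calI\subsetneq\calQ$, with associated strictly stable equilibria $\pt$ supported on $\calI$ and $\qt$ supported on $\calQ$. Since $\calQ$ is internally $D$-stable, $\qt$ is an internally $D$-stable equilibrium, which is precisely the hypothesis demanded by Theorem~\ref{t:Dexclusion}. The one point needing verification is that $\pt$ qualifies as a $\calQ$-stable equilibrium so that the theorem can be invoked: strict stability of $\pt$ is equivalent, by the Lemma with $\calQ=\Nset$, to $\Nset$-stability, which yields $a_i-(B\pt)_i<0$ for every $i\in\Nset\setminus\calI$, and in particular for every $i\in\calQ\setminus\calI$; together with the internal stability of $\pt$ this is exactly the defining condition of $\calQ$-stability. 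Theorem~\ref{t:Dexclusion} then forbids the existence of any $\calQ$-stable equilibrium supported inside $\calQ$ other than $\qt$, whereas $\pt\ne\qt$ since their supports differ --- a contradiction. Hence no two members of $\calA$ are comparable, which is the first assertion of the corollary.

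Finally I would apply Sperner's lemma. The family $\calA$ is an anti-chain in the Boolean lattice $(\calP(\Nset),\subseteq)$ built on the $N$-element ground set $\Nset$, so its cardinality is at most $\binom{N}{\lfloor N/2\rfloor}$. Combined with the injectivity established in the reduction step, this bounds the number of strictly stable, internally $D$-stable equilibria by the same quantity, completing the proof.

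The only place requiring genuine care --- and thus the main, if modest, obstacle --- is the verification in the second step that a strictly stable equilibrium supported on a subcommunity is automatically $\calQ$-stable for any intermediate community $\calQ$. This hinges on the equivalence of strict stability with $\Nset$-stability and on the inclusion $\calQ\setminus\calI\subseteq\Nset\setminus\calI$; once this is in place, Theorem~\ref{t:Dexclusion} supplies the contradiction and everything else is bookkeeping.
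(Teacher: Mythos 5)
Your proposal is correct and follows exactly the route the paper intends: the paper states the corollary as a direct consequence of Theorem~\ref{t:Dexclusion} together with Sperner's lemma, and your write-up simply fills in the details it leaves implicit (the injectivity of the support map via nonsingularity of $\BII$, and the observation that strict stability, i.e.\ $\Nset$-stability, implies $\calQ$-stability for any intermediate $\calQ$ because $\calQ\setminus\calI\subseteq\Nset\setminus\calI$). No gaps; this is the same argument, carefully spelled out.
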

    
\begin{remark}
Note that if $B$ is $D$-symmetrizable, any strictly stable state is internally $D$-stable. In this case the number of strictly stable equilibria is bounded above by $\binom{N}{\lfloor N/2\rfloor}$.
\end{remark}

We remark that when $N$ is large, this bound is exponentially large in $N$ and not so very much smaller than $2^N$, the number of all subsets of $\Nset$. 
For Stirling's approximation says $n!\sim\sqrt{2\pi n}(\frac{n}{e})^n$, thus
    \begin{equation}\label{e:stirling}
     \binom{N}{\lfloor N/2\rfloor}\sim 2^N \sqrt{\frac{2}{\pi N}}.
    \end{equation}

\begin{remark}
The same type of anti-chain property as described for Lotka-Volterra systems in Corollary~\ref{c:ds bounds} 
is well-known to hold for the supports of evolutionarily stable states (ESSs) in evolutionary game theory.
The Bishop-Cannings theorem~\cite[Thm.~2]{bishop1976models} implies that
the support of any ESS can neither contain nor be contained in the support of any other.
This theorem about ESSs actually provides a different collection of Lotka-Volterra communities that enjoy
the anti-chain property.  We discuss this in detail below in Section~\ref{s:ESS}.
\end{remark}

\section{Multiplicity of stable steady states}\label{s:multiplicity}

We do not know whether the bound in Corollary~\ref{c:ds bounds} that comes from Sperner's lemma is sharp.
For certain systems whose interactions have a bimodal competition structure, though,
the number of stable communities can be exponentially large in $N$, 
and greater than $2^{N/2}$ in particular.  
This number is a bit larger than the square root of the bound in \eqref{e:stirling}.

Systems with such great numbers of stable communities may be quite rare. 
In the course of extensive numerical explorations of a random class of Lotka-Volterra systems, Lischke and L\"offler \cite[Table 2]{lischke2017finding} found that multiple stable equilibria occur in about half of their simulations involving between 2 and 60 species, with about 2 percent having more than 25 stable equilibria. 
It appears that no more than about 40 stable equilibria in one system were ever encountered
in \cite{lischke2017finding}. 
With $N=60$, though, more than $2^{N/2}>10^9$ stable equilibria are possible in theory. Thus we are interested to  
investigate whether robust conditions can be described which ensure that large numbers of stable equilibria exist.

\subsection{Indistinguishable species}
One property that can allow many stable communities to exist is that 
stability persists if some species in a community is exchanged for a different species. 
If such a stability-preserving exchange is possible for $m$
different pairs of species independently, then the number of stable communities
is at least as large as $2^m$.  

The simplest type of exchange of this kind occurs for two species 
$x$ and $y$ with identical growth rates and interaction coefficients, satisfying 
\begin{equation}
a_x=a_y, \quad B_{ix}=B_{iy}, \quad B_{xj}=B_{yj},
\end{equation}
for all $i,j\in\Nset$.  We call $x$ and $y$ {\it indistinguishable} in this case.

For two such species, permuting the index labels in the Lotka-Volterra system~\eqref{eq:generalLV}
by swapping $x$ and $y$ leaves the system invariant.
Thus if $\calI$ is a stable community that contains $x$ but not $y$,
then the community $\hat\calI$ obtained by replacing $x$ by $y$ is also stable.

We shall describe two examples which involve groups of indistinguishable species,
permitting large numbers of stable communities.

\begin{exmp}[Complete indistinguishability and competitive exclusion] In the simplest case, 
all species are pairwise indistinguishable,
with interspecific competition coefficients all the same, 
and intraspecies competition coefficients also all the same: 
\begin{equation}\label{ex:m1}
    B_{ij}=
    \begin{cases}
    \alpha &\text{if $j\ne i$},\\
    \beta &\text{if $j=i$},
        \end{cases}
        \qquad 
        a_i = \beta.
    \end{equation}
When $\alpha>\beta>0$ 
this system has {exactly $N$ strictly stable 
steady states} $\pt$ with $\pt=\one_{\calI}$ for any singleton set $\calI=\{i\}$, $i=1,\ldots,N$.
Because $B$ is symmetric, Corollary~\ref{cor:LL} applies. 
Thus, when the interspecific competitions are stronger than the intraspecific competition,  
the competitive exclusion principle is valid. 
(When $0<\alpha<\beta$ on the other hand, $B$ is positive definite and the system has a unique strictly stable equilibrium  having equal population densities for all species.)
\end{exmp}

A much larger number of stable communities can be obtained. 
Suppose the set of $N$ species can be partitioned into $m$ disjoint subsets,
each of which consists of pairwise indistinguishable species,
and suppose further that a stable community $\calI$ exists 
that contains \emph{exactly one member from each subset}.
Then each member of $\calI$ can be exchanged with any member indistinguishable from itself.
If $k_1,k_2,\ldots,k_m$ denote the number of species in the $m$ different subsets,
then the number of stable communities in this case is at least as large as the number
\begin{equation}\label{n:kprod0}
\prod_{j=1}^m k_j = k_1k_2\cdots k_m.
\end{equation}
We will show that this is indeed possible for any partition of $N$,
as a special case of the main result in the next subsection.
See Example~\ref{x:rivals} below.

\subsection{Weak vs strong competition}
As mentioned in the Introduction, some recent biological studies
suggest that weak interactions may predominate in certain naturally 
occurring microbiomes, but stability is enhanced by the presence of 
some strongly competitive interactions.
In this section we describe examples with this nature,
having many stable communities.  

Our construction is motivated by a known result in evolutionary
game theory for symmetric payoff matrices related to the incidence
matrix of a general graph.  For such matrices,
Cannings and Vickers \cite[II]{vickers1988patterns} state that
the ESSs are characterized in terms of the \emph{cliques} of the graph (maximal complete subgraphs).
In the context of continuous-time models of allele selection in population
genetics, with a symmetric fitness matrix of this type,
Hofbauer and Sigmund \cite[Sec.~19.3]{HofbauerSigmund}
state that all stable rest points are characterized in terms of the cliques.
Below we prove that a result of this type holds for Lotka-Volterra systems.

\begin{exmp}[Friends vs rivals] \label{x:friends}
We suppose that any two different species $i$ and $j$ are either 
relatively \emph{friendly} or are strong \emph{rivals}.
The interspecific interaction coefficients $B_{ij}$ 
will take only three values: $\alpha$ (modeling friendly competition), 
$\beta$ (self-inhibition), and $\gamma$ (strong rivalry), and we assume
\begin{equation}\label{c:abc}
\alpha < \beta < \gamma.
\end{equation}
We set $a=\one$ ($a_i=1$ for all $i$) and
\begin{equation} \label{e:Brival}
B_{ij} = \begin{cases}
\alpha  & \text{if $i$ and $j$ are friendly},\\
\beta & \text{if $i=j$},\\
\gamma & \text{if $i$ and $j$ are rivals.}
\end{cases}
\end{equation}
Evidently $B$ is symmetric.  If $\alpha=-1$ and $\beta=\gamma=0$, the matrix
$-B$ is the incidence matrix for the graph whose edges 
connect friendly species.

In this context, a \emph{clique} is a maximal set of mutually friendly species. 
That is, a set $\calI\subset\Nset$ is a clique if  every two different species $i,j\in\calI$ 
are friendly, and no species $k\notin\calI$ is friendly with all the species in $\calI$
(so every $k\notin\calI$ has some rival in $\calI$).

Under the assumptions above, in this example we have the following.
\begin{prop}\label{p:clique}
Let $\calI\subset\Nset$ be a set with $m$ members. 
Then $\calI$ is a stable community if and only if $\calI$ is a clique and
\begin{equation}\label{d:cm}
    c_m := \beta+(m-1)\alpha>0. 
\end{equation}
\end{prop}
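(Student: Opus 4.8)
The plan is to verify both implications directly against the definition of a strictly stable equilibrium for the explicit matrix $B$ in \eqref{e:Brival}. Recall from the Lemma at the end of Section~\ref{ss:internal} that $\calI$ is a stable community precisely when it supports an equilibrium $\pt$ that is internally stable and satisfies the external stability condition \eqref{d:external}, namely $a_k-(B\pt)_k<0$ for every $k\in\Nset\setminus\calI$. Since $B$ is symmetric, internal stability of $\pt$ reduces to positive definiteness of the principal block $\BII$: writing $D=\dg{\pt_\calI}>0$, the matrix $D\BII$ is similar to the symmetric matrix $D^{1/2}\BII D^{1/2}$, whose inertia equals that of $\BII$ by congruence, so $-D\BII$ is stable if and only if $\BII>0$.

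Next I would record the eigenstructure of a \emph{friendly block}. If all members of $\calI$ are mutually friendly, then $\BII=(\beta-\alpha)I_m+\alpha\,\one\one^T$, with eigenvalues $\beta-\alpha$ (multiplicity $m-1$, on vectors orthogonal to $\one$) and $c_m=\beta+(m-1)\alpha$ (on $\one$). Because $\alpha<\beta$, positive definiteness of this block is equivalent to $c_m>0$. Each row of $\BII$ sums to $c_m$, so $\BII\one=c_m\one$ and the unique equilibrium is $\pt_\calI=\BII^{-1}\one=c_m^{-1}\one$, positive exactly when $c_m>0$. For the forward implication (clique and $c_m>0$ imply stable), internal stability is then immediate. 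For external stability fix $k\notin\calI$; maximality of the clique gives $k$ at least one rival in $\calI$, say $r\ge1$ rivals and $m-r$ friends, so $(B\pt)_k=c_m^{-1}\bigl((m-r)\alpha+r\gamma\bigr)$. Clearing the positive factor $c_m$, the desired inequality $(B\pt)_k>1=a_k$ reduces to $r(\gamma-\alpha)>\beta-\alpha$, which holds since $r\ge1$ and $\gamma>\beta>\alpha$. Hence $\pt$ is strictly stable.

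For the converse I would show that stability forces $\calI$ to be mutually friendly and then maximal. Positive definiteness is inherited by principal submatrices, so a rival pair $\{x,y\}\subseteq\calI$ would require the block $\pmat{\beta&\gamma\\\gamma&\beta}$ to be positive definite; but its eigenvalues are $\beta+\gamma$ and $\beta-\gamma$, and $\beta-\gamma<0$, a contradiction. Thus $\calI$ is mutually friendly, the friendly-block eigenstructure applies, internal stability yields $c_m>0$, and $\pt_\calI=c_m^{-1}\one$. Finally, if some $k\notin\calI$ were friendly with every member of $\calI$, then $(B\pt)_k=c_m^{-1}m\alpha$, and external stability would force $m\alpha>c_m=\beta+(m-1)\alpha$, i.e.\ $\alpha>\beta$, contradicting $\alpha<\beta$. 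Hence every outside species has a rival in $\calI$, so $\calI$ is a clique.

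All the computations are elementary once the friendly-block eigenstructure is in hand, so I do not expect a serious obstacle. The two points requiring care are the reduction of internal stability to positive definiteness of $\BII$ despite the diagonal weighting $\dg{\pt_\calI}$, and the use of the \emph{hereditary} nature of positive definiteness through the $2\times2$ rival minor, which is exactly what excludes any rivalry from a stable community and thereby pins down the clique structure.
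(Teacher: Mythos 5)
Your proposal is correct and follows essentially the same route as the paper's proof: reduce internal stability to positive definiteness of $\BII$ via the $\dg{\pt_\calI}^{1/2}$ similarity/congruence, use the eigenstructure $\{\beta-\alpha, c_m\}$ of the friendly block, rule out rival pairs through the indefinite $2\times2$ principal submatrix, and enforce maximality via the external stability sign condition. The only cosmetic difference is that you count the number $r$ of rivals explicitly where the paper simply bounds using $r\ge1$.
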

As a corollary, the stable communities
in this example coincide exactly with the cliques, 
provided we know $c_m>0$ for every clique.
This holds in particular if $\alpha\ge0$, meaning all interactions in the system are competitive.
If $\alpha<0$, it holds if $\beta>(M-1)|\alpha|$, where $M$ is the size of the largest clique.
\begin{proof}
First, suppose $\calI$ is a clique and \eqref{d:cm} holds. Then the state 
\begin{equation}\label{d:pcliq}
p=\frac{\one_\calI}{\beta+(m-1)\alpha}, \qquad\mbox{with}\quad
p_i = \begin{cases} 1/c_m & \mbox{if $i\in\calI$,}\\0 & \mbox{otherwise,}
\end{cases}
\end{equation}
is an equilibrium,
and the matrix $A$ in the linearized equation \eqref{e:linear} has the following structure:
Whenever $i\notin\calI$ we have $A_{ij}=0$ for all $j\ne i$, and moreover,
because $i$ has at least one rival $j\in\calI$ and $A_{ii}=a_i-(Bp)_i$,
\begin{align}\label{e:cajj}
   c_m A_{ii} &= c_m-\sum_{j\in\calI}B_{ij} \le c_m-\gamma-(m-1)\alpha = \beta-\gamma<0.
\end{align}
This is the external stability condition. On the other hand, because 
the block $\AII = -\dg{p_\calI} \BII$ and $c_m\dg{p_\calI} = I$, we find that
\begin{equation}\label{e:Bba}
-c_m\AII = \BII = (\beta-\alpha)I + \alpha \one\one^T .
\end{equation}
The eigenvalues of this symmetric matrix are $c_m$ (with eigenvector $\one$) and 
$\beta-\alpha$ (with eigenspace orthogonal to $\one$). Since both are positive,
$\AII$ is negative definite. Hence $p$ is strictly stable, so $\calI$ is a stable community. 

Conversely, suppose $\calI$ is a stable community, supporting
a strictly stable equilibrium $p$.  Necessarily $\AII=-\dg{p_\calI}\BII$ is stable,
and so also is the similar (and symmetric) matrix $-\dg{q} \BII\dg{q}$ where 
$q_i=\sqrt{p_i}$ for all $i$.
By Sylvester's law of inertia, $\BII$ is necessarily positive definite.
Then it follows that $\calI$ contains no pair of rivals, for otherwise the indefinite matrix
\[
\begin{pmatrix}
    \beta & \gamma \\ \gamma & \beta
\end{pmatrix}
\]
would be a principal submatrix of $\BII$. 

Thus $\calI$ is a set of mutually friendly species, and necessarily $\BII$ has the form in \eqref{e:Bba}.
It follows that the eigenvalue $c_m>0$ and that $p$ takes the form in \eqref{d:pcliq}. 
If $\calI$ is not itself a clique,
then some $i\notin\calI$ is friendly with all $j\in\calI$, and as in \eqref{e:cajj}
we calculate that  $c_mA_{ii}=\beta-\alpha>0$. This contradicts the strict stability of $p$. 
Hence $\calI$ is a clique.
\end{proof}
\end{exmp}

\begin{remark}
In the example above, the stability of a given community $\calI$ of
mutually friendly species persists under a slight loosening of the constraints 
on the interspecific interaction coefficients.  
Namely, one need not assume the symmetry $B_{ij}=B_{ji}$ for species $i\notin\calI$. 
It is only necessary that each such species $i$ be strongly
inhibited by some member $j\in\calI$, having $B_{ij}>\beta$ 
(for this ensures $A_{ii}<0$ in \eqref{e:cajj}).
No condition regarding the inhibition $B_{ji}$ of species $j$ by $i$ is needed.
\end{remark}

In general it does not seem quite easy to count all the cliques in a graph,
so we describe a class of special cases which shows that the number of stable communities
in \eqref{n:kprod0} can be achieved (cf.~\cite[Exercise~19.3.3]{HofbauerSigmund}).

\begin{exmp}[Partitioning by rivals] \label{x:rivals}
Suppose that in the preceding example, the $N$ species can be partitioned into $m$ disjoint and nonempty sets 
of \emph{mutual rivals}, respectively having $k_1,k_2,\ldots,k_m$ members,
and any two species from different sets are friendly.
Then clearly each clique (maximal set of mutually friendly species) has $m$ members and
is composed of one member from each set of rivals. Moreover, the number of cliques
is given by \eqref{n:kprod0}. Provided \eqref{d:cm} holds for this value of $m$,
these cliques comprise  all the possible stable communities.
\end{exmp}

The maximum number of cliques in a graph of $N$ nodes is \cite{moon1965} 
\begin{equation}
n\cdot 3^{m-1}, \quad\mbox{if $N=3(m-1)+n$ with $n=2,3$ or $4$} .
\end{equation}
This is therefore the maximum number of ESSs occurring in the main example considered in  \cite[II]{vickers1988patterns}.
In Example~\ref{x:rivals} we achieve this number with $m-1$ sets of 3 rivals each and one set of $n$.
For $N=60$ we have $m=20$ and find $3^{20}\approx 3.49\times 10^9$ strictly stable equilibria 
can occur in such a system.

\subsection{Robust criteria for stability of cliques} 
The property of being a stable community naturally persists under 
\emph{sufficiently small} changes in the growth rates $a_i$ and 
interaction coefficients $B_{ij}$.  
But the mathematical notion of ``sufficiently small'' leaves it unclear
just \emph{how} small a change is allowed.
Here we aim to describe a simple and explicit set of 
quantitative bounds which ensure that a community $\calI$ is stable,
focusing on cases qualitatively similar to Example~\ref{x:friends},
in which $\calI$ essentially consists of a maximal set of mutually friendly species.

Recall that, for given $a$ and $B$, a community $\calI$ is stable 
if it supports a strictly stable equilibrium $p$. This means exactly that,
in the notation of section~2.2, the following conditions hold:
\begin{itemize}
\item[(i)] For all $i\in\calI$, $a_i=\sum_{j\in\calI}B_{ij}p_j$ and $p_i>0$.
\item[(ii)] For all $i\notin \calI$, $a_i<\sum_{j\in\calI}B_{ij}p_j$ and $p_i=0$.
\item[(iii)] $\BII$ is nonsingular and $\AII =-\dg{p_\calI}\BII$ is stable. 
\end{itemize}
For any specific case, general perturbation results for 
linear systems~\cite[Sec.~2.7]{GolubVanLoan}
and matrix stability~\cite[Thm.~2.4]{hewer1988}
can be invoked to provide quantitative bounds for
changes in $a$ and $B$  which ensure that these properties 
persist for a perturbed equilibrium with the same support.

We do not develop such results here, but instead pursue 
the limited aim of describing a set of systems 
in which interspecific competition is 
bimodal---either weak or strong---and that are qualitatively similar 
to Example~\ref{x:friends},
having multiple stable communities formed by cliques.

For simplicity, we will consider only competitive systems
for which 
\begin{equation}\label{c:compete}
a_i>0 \quad\mbox{and}\quad  B_{ij}\ge 0 \quad\mbox{ for all $i,j\in\Nset$.}
\end{equation}
For notational convenience we also suppose that a diagonal scaling 
as in \eqref{e:Dscale} has been performed with $d_{jj}=a_j/B_{jj}$, corresponding to  
\begin{equation}\label{c:abii}
\hat B_{ij} = \frac{B_{ij}a_j}{B_{jj}}, \qquad \hat p_i = \frac{B_{ii}p_i}{a_i}
\qquad\mbox{for all $i,j\in\Nset$,}
\end{equation}
whence $a_i = \hat B_{ii}$ for all $i$. 
\begin{prop} 
Assume \eqref{c:compete} and let $\alpha\in(0,\frac12)$.
Suppose $\calC$ is some collection of communities $\calI\subset\Nset$
for which the following hold:
\begin{align}
 \sum_{j\in\calI,j\ne i}\hat B_{ij} &\le \alpha\, a_i
\quad\mbox{for each $i\in\calI$,}
\label{C:friend}
\\  \sum_{j\in\calI} \hat B_{ij} &> \frac{a_i}{1-\alpha}
\quad\mbox{for each $i\notin\calI$.} 
\label{C:rival}
\end{align}
Then each $\calI\in\calC$ is a community that supports a strictly
stable equilibrium which globally attracts all solutions having the same support.
\end{prop}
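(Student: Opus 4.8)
The plan is to work throughout in the scaled coordinates of \eqref{e:Dscale}--\eqref{c:abii}, in which $\hat B_{ii}=a_i$. Since the diagonal scaling $p=D\hat p$ is a diffeomorphism of $\RNge$ that carries solutions to solutions, preserves supports and equilibria, and preserves both linear asymptotic stability and global attraction within each support class, it suffices to establish all three conclusions (a positive equilibrium supported exactly on $\calI$, its strict stability, and global attraction of same-support solutions) for the scaled system with matrix $\hat B$. Fix $\calI\in\calC$. The first observation is that \eqref{C:friend} together with $\hat B_{ij}\ge0$ and $\hat B_{ii}=a_i$ says exactly that $\hat B_{\calI\calI}$ is strictly row diagonally dominant with positive diagonal: $\sum_{j\in\calI\setminus\{i\}}|\hat B_{ij}|\le\alpha a_i<a_i=\hat B_{ii}$ for every $i\in\calI$. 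This is the diagonal dominance condition \eqref{e:ddominance} for $\hat B_{\calI\calI}$, so by the discussion there and Remark~\ref{r:d2vl} the matrix $-\hat B_{\calI\calI}$ is VL-stable, hence $D$-stable; in particular $\hat B_{\calI\calI}$ is nonsingular.

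Next I would produce the equilibrium and, crucially, pin down its size. Writing $\hat B_{\calI\calI}\hat p_\calI=a_\calI$ componentwise and dividing by $\hat B_{ii}=a_i$ shows that $\hat p_\calI$ is the unique fixed point of the affine map $(Tp)_i=1-a_i^{-1}\sum_{j\in\calI\setminus\{i\}}\hat B_{ij}p_j$. By \eqref{C:friend} this $T$ is an $\alpha$-contraction in the sup norm, and---using $\hat B_{ij}\ge0$---it maps the box $\{p:1-\alpha\le p_i\le1\ \forall i\in\calI\}$ into itself: from $p_j\le1$ one gets $\sum\hat B_{ij}p_j\le\alpha a_i$, hence $(Tp)_i\ge1-\alpha$, while $p_j\ge0$ gives $\sum\hat B_{ij}p_j\ge0$, hence $(Tp)_i\le1$. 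Therefore the unique equilibrium supported on $\calI$ satisfies
\[
1-\alpha\le \hat p_i\le 1\qquad(i\in\calI),
\]
so in particular $\hat p_i>0$ and $\calI=\spt\hat p$.

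The external invasion condition then follows at once from \eqref{C:rival} and the sharp lower bound: for each $i\notin\calI$,
\[
(\hat B\hat p)_i=\sum_{j\in\calI}\hat B_{ij}\hat p_j\ \ge\ (1-\alpha)\sum_{j\in\calI}\hat B_{ij}\ >\ (1-\alpha)\,\frac{a_i}{1-\alpha}=a_i,
\]
i.e. $a_i-(\hat B\hat p)_i<0$. For internal stability, since $-\hat B_{\calI\calI}$ is $D$-stable and the diagonal matrix $\dg{\hat p_\calI}$ is positive, the block $\AII=-\dg{\hat p_\calI}\hat B_{\calI\calI}$ in \eqref{d:blockL} is stable, so $\hat p$ is internally stable (indeed internally VL-stable, per Definition~\ref{d:internal}). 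Combining internal stability with the external condition just verified, $\hat p$ is $\Nset$-stable, which by the characterization of strict stability as $\Nset$-stability (Section~\ref{ss:internal}) means $\hat p$ is strictly stable. Finally, global attraction of all solutions with support community $\calI$ follows from internal VL-stability: applying Theorem~\ref{t:VLexclusion} with $\calQ=\calI$ (equivalently, monotonicity of Volterra's relative-entropy Lyapunov function \eqref{def:Fp} restricted to the invariant face $\spt p=\calI$), $\hat p$ attracts every solution supported exactly on $\calI$.

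I expect the one genuinely delicate step to be the two-sided bound $1-\alpha\le\hat p_i\le1$. Because $\hat B_{\calI\calI}$ has nonnegative off-diagonal entries it is \emph{not} an M-matrix, so one cannot read off positivity of $\hat p_\calI$ from sign properties of $\hat B_{\calI\calI}^{-1}$; the invariant-box/contraction argument is what simultaneously supplies positivity and the precise lower bound $\hat p_i\ge1-\alpha$. That lower bound is exactly what makes \eqref{C:rival} force the strict invasion inequality---a cruder estimate such as the Neumann-series bound $\hat p_i\ge1-\tfrac{\alpha}{1-\alpha}$ would be too weak---so the interplay of \eqref{C:friend} and \eqref{C:rival} through the common constant $1-\alpha$ is the heart of the argument.
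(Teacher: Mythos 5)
Your proof is correct and follows essentially the same route as the paper: the same contraction-mapping argument on the box $[1-\alpha,1]^m$ to pin down $1-\alpha\le\hat p_i\le1$, the same use of \eqref{C:rival} for the invasion condition, and diagonal dominance plus Remark~\ref{r:d2vl} for internal VL-stability and global attraction. The only (harmless) difference is that you apply diagonal dominance directly to $\hat B_{\calI\calI}$ and then invoke $D$-stability to absorb the factor $\dg{\hat p_\calI}$, whereas the paper checks diagonal dominance of $\dg{\hat p_\calI}\hat B_{\calI\calI}$ itself, which is where its hypothesis $\alpha<\tfrac12$ is actually used.
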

Evidently, condition \eqref{C:friend} requires that  for species within $\calI$, 
the (total) interspecific competition  is weak compared to self-inhibition,
and \eqref{C:rival} requires that each species not in $\calI$  
is strongly competed against (in total) by the species inside $\calI$.  

\begin{proof}
Let $\calI\in\calC$ have $m$ members.   
A state $\hat p\in\RNgt$ supported by $\calI$ is an equilibrium for \eqref{e:Dscale}
if and only if 
\begin{equation}\label{e:fixp}
\hat p_i = F(\hat p)_i:= 1 - \frac1{a_i}\sum_{j\in\calI,j\ne i} \hat B_{ij}\hat p_j 
\quad\mbox{for all $i\in\calI$}.
\end{equation}
Under the given hypotheses, the function $F$ is a strict contraction in the max norm
on $\R^m$ given by $\|v\|_\infty = \max_{i\in\calI}|v_i|$, since
\[
\|F(v)-F(w)\|_\infty \le \alpha\|v-w\|_\infty .
\]
The set $S=[1-\alpha,1]^m \subset\R^m$ is mapped into itself by $F$, 
hence $F$ has a unique fixed point in $S$ given by $\hat p_\calI$,
where $\hat p$ is an equilibrium of \eqref{e:Dscale} supported by $\calI$ satisfying  
$1-\alpha\le  \hat p_i\le 1$ for each $i\in\calI$.
Condition \eqref{C:rival} ensures that for each $i\notin\calI$, 
\[
a_i - \sum_{j\in\calI} \hat B_{ij}\hat p_j \le a_i - 
(1-\alpha)\sum_{j\in\calI} \hat B_{ij} <0.
\]
Hence conditions (i) and (ii) above for a strictly stable equilibrium hold. 
Condition (iii) holds also because the matrix $C=\dg{\hat p_\calI}\hat B_{\calI\calI}$
(similar to $-\AII$) is diagonally dominant: 
Indeed, for all $i\in\calI$ we have 
\[
C_{ii} = \hat p_i \hat B_{ii} = a_i - \sum_{j\in\calI,j\ne i} \hat B_{ij} \hat p_j \ge a_i(1-\alpha)
\]
since \eqref{c:abii} and \eqref{e:fixp} hold and $ \hat p_j\le1$, while 
\[
\sum_{j\in\calI,j\ne i}|C_{ij}|  =  \hat p_i\sum_{j\in\calI,j\ne i}  \hat B_{ij} \le a_i\alpha<a_i(1-\alpha).
\]
By Gershgorin's theorem, every eigenvalue of $C$ has positive real part.
Moreover, the diagonal dominance of $C$ also implies $\AII$ is VL-stable
(by \cite[Thm.~3]{moylan1977}, or see the remark below).
Hence the community $\calI$ is internally VL-stable, and the equilibrium $p$ that 
it supports globally attracts all solutions with the same support.
\end{proof}
\begin{remark}\label{r:d2vl}
We sketch a proof  that $-C$ is VL-stable (cf.~\cite{tartar1971}) 
for the reader's convenience. 
Let $G_{ij}=|C_{ij}|/C_{ii}$ for $i\ne j$, and $G_{ii}=0$. 
Then $I-G$ is a diagonally dominant $M$-matrix,
with inverse $\sum_{k\ge0}G^k$ whose entries are all nonnegative.
Hence $q=(I-G)^{-T}\one \in\RNgt$, and 
it follows that $C^T\dg{q}$ is diagonally dominant, for
$
C_{ii}q_i - \sum_{j\ne i} |C_{ji}q_j| = C_{ii}>0.
$
Because $\dg{q}C$ is diagonally dominant too, $DC+C^TD>0$ where $D=\dg{q}$.
\end{remark}

\section{Relation to evolutionary game theory}\label{s:ESS}

In evolutionary game theory, there is a substantial body of research on
multiplicity and patterns of \emph{evolutionarily stable states} (ESSs)
and the dynamics of \emph{replicator equations},
which bears a close comparison with the results we have developed in this paper for Lotka-Volterra systems.
For various known facts about these things that we mention below, we refer to the 
books of Hofbauer and Sigmund \cite{HofbauerSigmund} and Hadeler \cite[Sec.~3.4]{hadeler2017book}.

{\it Correspondence.} The dynamics of the Lotka-Volterra system \eqref{eq:generalLV} in $\RNge$  
is well-known to correspond to those of \emph{replicator equations} of the form
\begin{equation}\label{e:rep}
    x'_i = x_i((Ax)_i-x^T Ax) , \quad i=0,1,\ldots,N,
\end{equation}
with $x$ in the $N$-simplex 
$\Delta_N$ consisting of all $x=(x_0,x_1,\ldots,x_N)$ such that $x_i\ge0$ for all $i$ and 
$\sum_{i=0}^N x_1=1$,
via the mapping $p\mapsto x$ given by
\begin{align}\label{e:lv2rep}
    x_{0} = 1/(1+\sum_{j=1}^N p_j),\qquad
    x_i &= p_i/(1+\sum_{j=1}^N p_j), \quad i=1,\cdots, N. 
\end{align}
This works for the payoff matrix
\begin{equation}\label{e:Aform}
A=\begin{pmatrix} 0&0\\ a & -B \end{pmatrix},
\end{equation}
and after a solution-dependent nonlinear change of time variable.

{\it Notion of ESS.} An important notion in evolutionary game theory is 
the following: 
\begin{definition}\label{d1:ess}
A state
$y\in\Delta_N$ is an {\it evolutionarily stable state (ESS)} when the following conditions are satisfied:
\begin{itemize}
    \item[(a)] $y^T Ay \geq x^T A y$, for all $x\in \Delta_N$
    \item[(b)] if $x\ne y$ and $y^T Ay =x^T Ay$ then $y^T Ax > x^T Ax$, for all $x\in\Delta_N$
\end{itemize}
\end{definition}
An equivalent characterization is that $y$ is an ESS if and only if 
\begin{equation}\label{d2:ess}
y^TAx>x^TAx \quad\mbox{ for all $x\ne y$ near enough to $y$ in $\Delta_N$.}
\end{equation}
The condition (a) alone makes $y$ a \emph{Nash equilibrium}.
It is known that any ESS is a steady state that is locally attracting 
(nonlinearly asymptotically stable) for replicator dynamics.
If $A$ is symmetric, any steady state is locally attracting if and only if it is an ESS.
If $A$ is not symmetric, however, a locally attracting steady state of \eqref{e:rep} 
need not be an ESS.

In what follows, we will describe conditions that characterize 
Lotka-Volterra equilibria that correspond to ESSs in the way above.
Our goal is to describe what stability properties such ESS-derived equilibria
must or may not have, and compare known exclusion principles for ESSs
to those we have developed in this paper. 

{\it Symmetries.}
A few relevant facts are the following:
The correspondence holds and the mapping $p\mapsto x$ can be reversed under the \emph{proviso} that $x_0\ne0$.
Replicator dynamics are known to be invariant under two kinds of
transformations, one that modifies all entries in any column of $A$ by adding a
constant $b_i$, and one that scales by a positive diagonal matrix $D$:
\begin{itemize}
\item[(i)] $A\mapsto A+\one b^T$ and  $x\mapsto x$ with the same time scale,
\item[(ii)] $A \mapsto AD$ and $x\mapsto D\inv x/(\one^TD\inv x)$ with a nonlinear time change.
\end{itemize}
Using a transformation of type (i), one can map any replicator equation 
in $\Delta_N$ with $x_0\ne0$ to an $N$-component Lotka-Volterra system. 
We note, however, that these correspondences do not generally 
allow symmetric $A$ to correspond with symmetric $B$ in \eqref{eq:generalLV} or vice versa.

Meanwhile, recall from \eqref{e:Dscale} that Lotka-Volterra systems are invariant under a positive diagonal scaling on $B$:
\begin{equation}
\mbox{$B\mapsto BD$ and $p\mapsto D^{-1}p $ with the same time scale.}
\end{equation}
One can expect that internal stability (see Def.~\ref{d:internal}) of equilibria of Lotka-Volterra systems will be conserved through the transformation, and it is true indeed. In replicator equations, however, a transformation of type (ii) can disrupt an ESS. In other words, when  $y$ is an ESS, an image $\hat y=D\inv y/(\one^TD\inv y)$, under a transformation of type (ii), is a Nash equilibrium but might \emph{not} be an ESS. We provide an example regarding this issue below.
 
{\it Relation to Lotka-Volterra.}
If $y=(y_0,y_1,\ldots,y_N)$ is an ESS with $y_0>0$, 
it corresponds to a locally attracting steady state 
\begin{equation}\label{d:y2q}
q = y_0\inv(y_1,\ldots,y_N) 
\end{equation}
for the Lotka-Volterra system \eqref{eq:generalLV} obtained 
by reducing $A$ to the form \eqref{e:Aform} by a transformation of type (i) above.
One can readily check that the Nash equilibrium condition (a) corresponds to the condition that,
for all $i=1,\ldots,N$,
\begin{equation}\label{c:saturated}
(a-Bq)_i=0 \quad \mbox{if $q_i>0$},
\qquad (a-Bq)_i\le0 \quad \mbox{if $q_i=0$}.
\end{equation}
A state $q$ satisfying these conditions is called a \emph{saturated fixed point} in \cite{HofbauerSigmund}.

The ESS condition \eqref{d2:ess} translates to mean that
\[
\left(\frac{1+\one^T p}{1+\one^T q} q - p\right)^T(a-Bp) >0 \quad\mbox{for all $p\ne q$ near $q$ in $\RNge$.}
\]
Substituting $p=q+r$, this is equivalent to saying that for all small enough $r$ with $q+r\in\RNge$,
\begin{equation}\label{e:rversion}
0< \left(\left(I-\frac{q\one^T}{1+\one^Tq}\right)r\right)^T (Br+Bq-a).
\end{equation}
Substituting $r=(I+q\one^T)v$, one then finds the following characterization.
\begin{lemma}\label{l:lvess}
A state $y\in\Delta_N$ with $y_0>0$ is an ESS for $A$ in the form \eqref{e:Aform}
if and only if for all nonzero $v\in\R^N$ small enough we have
\begin{equation}\label{e:vversion}
0< v^TB(I+q\one^T)v + v^T(Bq-a) \quad\mbox{if $v_i\ge0$ whenever $q_i=0$.}
\end{equation}
\end{lemma}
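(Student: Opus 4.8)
The plan is to start from condition \eqref{e:rversion}, which the discussion preceding the lemma already shows to be equivalent to $y$ being an ESS, and to pass to \eqref{e:vversion} by the linear change of variables $r=(I+q\one^T)v$. The one algebraic observation that makes everything collapse is that $I+q\one^T$ is invertible, being a rank-one (Sherman--Morrison) update of the identity, whose inverse is exactly the matrix appearing in \eqref{e:rversion}. Concretely, writing $s=\one^Tq\ge0$ and $P=I-\frac{q\one^T}{1+s}$, a direct multiplication using $\one^Tq=s$ gives $(I+q\one^T)P=I$, so $P=(I+q\one^T)\inv$. Since $q\in\RNge$ we have $1+s\ge1>0$, so $P$ is well defined.

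With this in hand the substitution is immediate. Setting $r=(I+q\one^T)v$ gives $Pr=v$, so the leading factor $\bigl(I-\frac{q\one^T}{1+\one^Tq}\bigr)r$ in \eqref{e:rversion} is simply $v$, while $Br=B(I+q\one^T)v$. Hence
\[
(Pr)^T(Br+Bq-a)=v^TB(I+q\one^T)v+v^T(Bq-a),
\]
which is precisely the right-hand side of \eqref{e:vversion}. Because $I+q\one^T$ is a linear bijection of $\R^N$, it carries a punctured neighborhood of $0$ in $r$ to a punctured neighborhood of $0$ in $v$ and back, so the quantifier ``for all small $r$'' matches ``for all small nonzero $v$.''

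It remains to check that the two constraint sets correspond, and this is the only step requiring care. Writing $r_i=v_i+q_i(\one^Tv)$, I would split on the sign of $q_i$. For indices with $q_i>0$ the quantity $q_i+r_i=q_i(1+\one^Tv)+v_i$ is positive for all sufficiently small $v$ regardless of the sign of $v_i$, matching the absence of any constraint on such $v_i$ in \eqref{e:vversion}. For indices with $q_i=0$ one has $r_i=v_i$, so $q_i+r_i=v_i$, and the requirement $q+r\in\RNge$ becomes exactly $v_i\ge0$; conversely, from $v=Pr$ one reads off $v_i=r_i=q_i+r_i\ge0$ when $q_i=0$. Thus $q+r\in\RNge$ for small $r$ holds if and only if $v_i\ge0$ whenever $q_i=0$ for the corresponding small $v$, in both directions.

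Putting these together, \eqref{e:rversion} holds for all small admissible $r$ exactly when \eqref{e:vversion} holds for all small nonzero admissible $v$, which is the assertion of the lemma. I expect the only genuine subtlety to be the constraint bookkeeping just described---ensuring that the active half-space constraints and the neighborhood quantifiers transform consistently under the change of variables; the remainder is the routine rank-one identity and a one-line expansion of the bilinear form.
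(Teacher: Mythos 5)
Your proposal is correct and follows exactly the paper's route: the paper derives \eqref{e:rversion} and then obtains the lemma by the very substitution $r=(I+q\one^T)v$ you use, leaving the inverse identity and the constraint bookkeeping implicit. Your verification that $(I+q\one^T)^{-1}=I-\frac{q\one^T}{1+\one^Tq}$ and that the admissible sets correspond under the change of variables just makes explicit what the paper omits.
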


From this characterization we can infer the following.
If $q_i>0$ for all $i$, then $Bq=a$ and it is necessary and sufficient 
for $y$ to be an ESS that the symmetric part of $B+a\one^T$ is positive definite.
(Or equivalently, the symmetric part of $B+\one a^T$ is positive definite.)

In general, if $q_i=0$ for some $i$, let $\calI=\spt q$, then 
since $(Bq-a)_\calI=0$, necessarily
\begin{equation}
0< v_\calI^T (\BII+a_\calI\one_\calI^T)v_\calI \quad\mbox{for all nonzero $v\in\R^N$},
\end{equation}
meaning the symmetric part of $\BII+a_\calI\one_\calI^T$ is positive definite.
This implies the symmetric part of $\BII$ is positive definite on the block subspaces 
of dimension $|\calI|$ orthogonal to both $\one_\calI$ and $a_\calI$.

It is natural to ask how \eqref{e:vversion} is related to internal stability in Lotka-Volterra equation. Considering that the ESS property brings nonlinear asymptotic stability, we cannot expect that $-\dg{q_{\calI}} \BII$ to be exponentially unstable. Combined with the external stability that the Nash condition provides, we have the following implication for the image of an ESS in the Lotka-Volterra system. 
\begin{theorem}\label{t:ESSinLV}
     Let $y$ and $q$ be equilibria of the replicator and Lotka-Volterra equations, respectively, that are equivalent in the sense of \eqref{d:y2q}. If $y$ is an ESS,  then $q$ is an internally stable, saturated fixed point. 
\end{theorem}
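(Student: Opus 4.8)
The plan is to establish the two assertions separately: that $q$ is a saturated fixed point, and that $q$ is internally stable (meaning $-\dg{q_\calI}\BII$ is stable, where $\calI=\spt q$). The first assertion is essentially immediate from the correspondence already set up: since $y$ is an ESS, condition (a) of Definition~\ref{d1:ess} holds, making $y$ a Nash equilibrium. As noted in the discussion preceding Lemma~\ref{l:lvess}, the Nash condition (a) translates exactly into the saturated-fixed-point conditions \eqref{c:saturated}, namely $(a-Bq)_i=0$ when $q_i>0$ and $(a-Bq)_i\le0$ when $q_i=0$. So that half requires only citing this translation.

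For internal stability, I would work from the characterization in Lemma~\ref{l:lvess}. Restricting \eqref{e:vversion} to vectors $v$ supported on $\calI$ (so the sign constraint ``$v_i\ge0$ whenever $q_i=0$'' is vacuous, and such $v$ may be taken positive or negative), and using that $(Bq-a)_\calI=0$, the linear term $v^T(Bq-a)$ drops out on the $\calI$-block. One is left with the statement that the symmetric part of $\BII+a_\calI\one_\calI^T$ is positive definite, as already derived in the excerpt. The key step is to leverage this to conclude that $-\dg{q_\calI}\BII$ has no eigenvalue with positive real part. I would argue as follows: suppose for contradiction that $-\dg{q_\calI}\BII$ has an eigenvalue with \emph{positive} real part. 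Then the image equilibrium $q$ would be linearly unstable \emph{within} the invariant face supported by $\calI$; pulling this back through the correspondence \eqref{e:lv2rep}, $y$ would be linearly (hence nonlinearly) unstable for the replicator dynamics restricted to the corresponding face of $\Delta_N$. But every ESS is nonlinearly asymptotically stable for replicator dynamics, a contradiction. Since eigenvalues with positive real part are excluded, every eigenvalue of $-\dg{q_\calI}\BII$ has nonpositive real part, i.e.\ the matrix is stable in the sense required (the excerpt's definition of internal stability only demands negative real parts; if a purely imaginary eigenvalue could occur one must rule it out separately).

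The main obstacle is precisely that last point: the ESS condition guarantees asymptotic stability, which rules out eigenvalues with \emph{strictly} positive real part, but one must also exclude eigenvalues on the imaginary axis to obtain genuine stability (strictly negative real parts) rather than mere non-instability. I expect this to be handled by exploiting the definiteness of the symmetric part of $\BII + a_\calI\one_\calI^T$ more carefully: the positive definiteness derived above should force $\BII$ itself to be, after the diagonal scaling $\dg{q_\calI}$, a matrix whose symmetric part is positive definite on a codimension-at-most-two subspace, and one checks that the scaled matrix $\dg{q_\calI}\BII$ cannot have a nonzero purely imaginary eigenvalue without violating this definiteness (an eigenvector $v$ with $\dg{q_\calI}\BII v = i\omega v$, $\omega\neq0$, would yield $\re\,v^*\BII\dg{q_\calI}^{-1}\BII v=0$, contradicting the sign condition). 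Thus the delicate part is converting the ESS's nonlinear asymptotic stability, together with the quadratic-form positivity from Lemma~\ref{l:lvess}, into a clean spectral statement about $-\dg{q_\calI}\BII$; once that conversion is made, internal stability follows and the theorem is proved.
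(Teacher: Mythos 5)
Your treatment of the saturation half is fine and matches the paper: the Nash condition (a) translates directly into \eqref{c:saturated}. The internal-stability half has a genuine gap, and you have correctly located it yourself: an ESS is only guaranteed to be \emph{nonlinearly} asymptotically stable, so the contradiction argument via instability of the restricted dynamics rules out eigenvalues of $-\dg{q_\calI}\BII$ with strictly positive real part but says nothing about eigenvalues on the imaginary axis (a nonlinearly asymptotically stable equilibrium can have a degenerate linearization). Your proposed repair does not close this. The definiteness actually available from Lemma~\ref{l:lvess} is that the symmetric part of $\BII + a_\calI\one_\calI^T$ is positive definite, not that of $\BII$ itself; and the displayed computation (that an eigenvector with $\dg{q_\calI}\BII v = i\omega v$ forces $\re\, v^*\BII\dg{q_\calI}\inv\BII v = 0$) neither follows cleanly from that eigenvalue relation nor contradicts the definiteness of $\BII + a_\calI\one_\calI^T$. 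What one actually gets from $\dg{q_\calI}\BII v=i\omega v$ is $\re(v^*\BII v)=0$, which is perfectly compatible with positive definiteness of the symmetric part of $\BII+a_\calI\one_\calI^T$, since the rank-one correction $a_\calI\one_\calI^T$ can carry all of the positivity.

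The paper avoids dynamics entirely and closes the imaginary-axis issue in one algebraic stroke. After reducing without loss of generality to full support (so $Bq=a$), it sets $C=(I+\one q^T)\inv B$; the ESS inequality \eqref{e:rversion} then says precisely that $r^TCr>0$ for all nonzero $r$, i.e.\ the symmetric part of $C$ is positive definite. Since $\dg{q}B=(\dg{q}+qq^T)C=R^2C$ with $R=R^T>0$, the matrix $\dg{q}B$ is similar to $RCR$, whose symmetric part is again positive definite; hence every eigenvalue of $\dg{q}B$ has strictly positive real part and $-\dg{q}B$ is stable. If you want to salvage your route, you would in effect have to reproduce this computation to handle the imaginary axis, at which point the dynamical half of your argument becomes superfluous.
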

\begin{proof}
    We already checked the saturated condition in \eqref{c:saturated}. Note that in \eqref{e:rversion}, $\spt r\subseteq \spt q$ if and only if $\spt v\subseteq \spt q$. So without loss of generality, we can assume that $q$ is of full support.
    Let's define $C=(I+\one q^T)\inv B$ 
    and rewrite \eqref{e:rversion} as follows:
\[
0< r^TCr=r^T\left(I-\frac{\one q^T}{1+\one^Tq}\right)Br, \quad\mbox{for all nonzero $r\in\R^N$}.
\]
In order to check internal stability of $q$, let's examine 
\[
\dg{q}B=\dg{q}(I+\one q^T)C=(\dg{q}+q q^T)C.
\]
Since $\dg{q}+q q^T$ is (symmetric) positive definite, we can find a positive definite matrix $R$ such that $R^2=\dg{q}+q q^T$. Then,
\[
\dg{q}B\sim R^{-1}\dg{q}BR=RCR.
\]
Note $-RCR$ is Volterra-Lyapunov stable since $r^TRCRr=(Rr)^TC(Rr)>0$, which implies that $-RCR$ is stable. Thus by similarity of $\dg{q}B$ and $RCR$, we can conclude that $-\dg{q}B$ is stable, i.e., $q$ is internally stable. 
\end{proof}
{\it Relation to strict stability.}
We would like to point out that Theorem~\ref{t:ESSinLV} is sharp in the sense that the ESS property neither implies nor is implied by strict stability of the corresponding equilibrium in the Lotka-Volterra system. The following examples not only support this but also bring out the problematic lack of an intrinsic dynamical nature for the ESS property.
\begin{exmp}
For $N=2$, let $0<\alpha<2$, $\beta>1+\alpha$ and 
\[
B = \begin{pmatrix} 1 & -\alpha\\ \beta &0\end{pmatrix},  \qquad q = \one, \qquad a = Bq .
\]
Then $-B$ is $D$-stable, and strictly stable (but not VL-stable).
The definiteness condition in Lemma~\ref{l:lvess} fails to hold, however, since for $v^T=(1,-1)$ we have
\[
v^TB(I+q\one^T)v = 1+\alpha-\beta<0.
\]
Therefore the corresponding state $y=\frac13(1,1,1)\in\Delta_2$ for the replicator system is not an ESS
for the corresponding matrix $A$ in \eqref{e:Aform}.

However, if we consider
\[
\Tilde{B}=BD,  \qquad \Tilde{a}=a,
\qquad\mbox{where}\quad
D = \begin{pmatrix} 1 &0\\ 0 &\frac{1-\alpha+2\beta}{\alpha}\end{pmatrix},
\]
we can check that \eqref{e:vversion} holds with $\qt=D^{-1}q$, i.e., for all nonzero $v\in\R^N$,

\[
0<v^T(\Tilde{B}+a\one^T)v=v^T\begin{pmatrix} 2-\alpha & -2\beta\\ 2\beta &\beta\end{pmatrix}v.
\]
This implies $\Tilde{y}= \frac{\alpha}{1+\alpha+2\beta}(1,1,\frac{1-\alpha+2\beta}{\alpha})\in\Delta_2$, which corresponds to $\qt$, is an ESS. Note that $\Tilde{y}$ is dynamically equivalent to $y$.

Moreover, since an internal ESS is a global attractor, we can see that the Lotka-Volterra steady state state $q=\one$ is a global attractor in $\R^2_+$.

We can summarize the implications of this example as follows:
\begin{itemize}
    \item the converse of Theorem~\ref{t:ESSinLV} is false,
    \item the image of an ESS under a transformation of type (ii) might not be an ESS,
    \item Lemma~\ref{l:lvess} can be used to prove global stability of an internal equilibrium in a Lotka-Volterra system which is not VL-stable.
\end{itemize}
\end{exmp}

The next example shows that the conclusions that Theorem~\ref{t:ESSinLV} ensures for the Lotka-Volterra image of an ESS are sharp in the sense that we cannot expect strict stability of the corresponding equilibrium in general.

\begin{exmp} We describe an example with $N=3$ of a non-strictly stable steady state 
$q$ that corresponds to an ESS. 
Take
\begin{equation}
B = \begin{pmatrix} 1&1&1\\ 1& 1&2\\1&2&1 \end{pmatrix}, \qquad 
a = \begin{pmatrix} 1\\1\\1\end{pmatrix}, \qquad q = \begin{pmatrix}1\\0\\0\end{pmatrix}.
\end{equation}
Then $a=Bq$ and the condition in Lemma~\ref{l:lvess} reduces to saying that for all
nonzero $v=(v_1,v_2,v_3)$ with $v_2$, $v_3\ge0$,
\begin{equation}\label{e:expos}
0< v^T(B+\one\one^T)v = 2(v_1+v_2+v_3)^2 + 2v_2v_3  .
\end{equation}
This is indeed true, so $q$ does correspond to an ESS $y\in\Delta_3$ for the payoff matrix
in \eqref{e:Aform}.
The matrix $A$ coming from \eqref{e:linear}, the linearized Lotka-Volterra system about $\pt=q$,
takes the form
\[
A = \begin{pmatrix} -1 & -1 &-1\\ 0 &0&0\\0&0&0
\end{pmatrix}.
\]
This linearization is degenerate and $q$ is not strictly stable (linearly asymptotically stable).
We can note also that the matrix $B(I+q\one^T)$ is symmetric but \emph{not} positive definite,
despite the validity of \eqref{e:expos} when $v_2$, $v_3\ge0$.
\end{exmp}

{\it Relation to stability of cliques.}
One last comparison we will make is between 
our result on the stability of cliques in our graph-based Example~\ref{x:friends}
and the characterization of ESSs in terms of cliques 
by Cannings and Vickers \cite{vickers1988patterns} for payoff matrices 
with the same graph-based structure.

When the Lotka-Volterra growth rates $a_i$ are all the same, 
there is a different map between Lotka-Volterra solutions and the replicator equations~\cite[Exercise~7.5.2]{HofbauerSigmund}. 
Namely, this is the projection map $p\mapsto x\in\Delta_N$  given by 
\begin{align}\label{e:lv2rep2}
    x_i &= p_i/\sum_{j=1}^N p_j, \quad i=1,\cdots, N, 
\end{align}
together with a nonlinear time change, taking the payoff matrix $A$ simply as $-B$.

Theorem~1 of Cannings and Vickers states, in our present terminology, that
if $B=-A$ is as in Example~\ref{x:friends} above, so \eqref{c:abc}--\eqref{e:Brival} hold, then there is an ESS with support $T\subset\Nset$ if and only if $T$ is a clique.
Moreover, such an ESS must take the form $y=\one_T/|T|$. 
These ESSs comprise all the stable equilibria in the replicator equation in this case.

But as Proposition~\ref{p:clique} shows, a clique $T$ with $m=|T|$ members
supports a strictly stable state $p$ under Lotka-Volterra dynamics 
if and only if the additional condition $\beta+(m-1)\alpha>0$ from \eqref{d:cm} holds.
The case that is explicitly analyzed in \cite{vickers1988patterns} is 
$\alpha=-1$, $\beta=0$, in which case \eqref{d:cm} never holds 
and no strictly stable states exist.
(Any Lotka-Volterra solution with support inside a 
clique will be unbounded in time, in fact.)
Replicator dynamics remain invariant under 
adding the same constant to all entries of $A$, though. 
So after a suitable change of $\alpha$, $\beta$, $\gamma$ the ESSs and 
strictly stable Lotka-Volterra states can all correspond. 

\begin{remark}[The Cannings-Vickers characterization of ESSs] 
Here we address an issue in the proof of Theorem 1 in \cite{vickers1988patterns} and indicate a clarification. 

In the proof that the support $T$ of an ESS must be a clique,
Cannings and Vickers state that ``if $T$ is not a clique then there is a clique $T^*$ containing $T$, or contained in it." As a general statement about graphs, this is not true---E.g., the set $T=\{1,2,3\}$ in the following graph has no super- or sub-graph that is a clique:
\begin{center}
\begin{tikzpicture}
  [scale=.8,auto=left,every node/.style={circle,fill=blue!20}]
  \node (n1) at (0,0)  {1};
  \node (n2) at (2,0)  {2};
  \node (n3) at (1,1)  {3};
  \node (n4) at (0,2)  {4};
  \node (n5) at (2,2)  {5};
 
  \foreach \from/\to in {n1/n4,n3/n5,n5/n2,n1/n3,n2/n3,n3/n4}
    \draw (\from) -- (\to);

\end{tikzpicture}
\end{center}
One can conclude $T$ is a clique by arguing as follows instead.
Suppose $T$ supports an ESS but not a clique. If $T$ is complete, we can find a clique $T^*$ that strictly contains $T$, 
which yields a contradiction with
the exclusion principle.
If $T$ is not complete, there exists a {complete} $T^*$ that is {maximal} as a \emph{subgraph} of $T$. Let $y$ be an ESS supported by $T$ and let $x=\frac{1}{|T^*|}\one_{T^*}$. 
Since $T^*\subset T$, $y^TAy=x^TAy$ so from Definition~\ref{d1:ess}(b), $y^TAx>x^TAx$ must hold. On the other hand, because $T^*$ is maximal in $T$,
\[
(Ax)_i
\begin{cases}
=\frac{1}{|T^*|}(|T^*|-1)& i\in T^*,\\
\leq\frac{1}{|T^*|}(|T^*|-2)& i\in T\setminus T^*.
\end{cases}
\]
This implies $x^TAx>y^TAx$, contradicting the observation we just made.
\end{remark}
    
\section*{Acknowledgments}
This material is based upon work supported by the National Science Foundation 
under grant DMS 1812609.

\bibliographystyle{siam}
\bibliography{selection}

\begin{thebibliography}{10}

\bibitem{armstrong1980competitive}
{\sc R.~Armstrong and R.~McGehee}, {\em Competitive exclusion}, Amer. Natur.,
  115 (1980), pp.~151--170.

\bibitem{bishop1976models}
{\sc D.~T. Bishop and C.~Cannings}, {\em Models of animal conflict}, Advances
  in Applied Probability, 8 (1976), pp.~616--621.

\bibitem{bomze2020ess}
{\sc I.~M. Bomze and W.~Schachinger}, {\em Constructing patterns of (many)
  {ESS}s under support size control}, Dyn. Games Appl., 10 (2020),
  pp.~618--640.

\bibitem{case1979global}
{\sc T.~J. Case and R.~G. Casten}, {\em Global stability and multiple domains
  of attraction in ecological systems}, Amer. Natur., 113 (1979), pp.~705--714.

\bibitem{chesson2000mechanisms}
{\sc P.~Chesson}, {\em Mechanisms of maintenance of species diversity}, Annual
  review of Ecology and Systematics, 31 (2000), pp.~343--366.

\bibitem{coyte2015ecology}
{\sc K.~Z. Coyte, J.~Schluter, and K.~R. Foster}, {\em The ecology of the
  microbiome: networks, competition, and stability}, Science, 350 (2015),
  pp.~663--666.

\bibitem{cross1978}
{\sc G.~Cross}, {\em Three types of matrix stability}, Linear Algebra and its
  Applications, 20 (1978), pp.~253 -- 263.

\bibitem{fiedler1966}
{\sc M.~Fiedler and V.~Pt\'ak}, {\em Some generalizations of positive
  definiteness and monotonicity.}, Numerische Mathematik, 9 (1966/67),
  pp.~163--172.

\bibitem{gause1934experimental}
{\sc G.~F. Gause et~al.}, {\em Experimental analysis of vito volterra’s
  mathematical theory of the struggle for existence}, Science, 79 (1934),
  pp.~16--17.

\bibitem{gilpin1975limit}
{\sc M.~E. Gilpin}, {\em Limit cycles in competition communities}, The American
  Naturalist, 109 (1975), pp.~51--60.

\bibitem{gilpin1976multiple}
{\sc M.~E. Gilpin and T.~J. Case}, {\em Multiple domains of attraction in
  competition communities}, Nature, 261 (1976), pp.~40--42.

\bibitem{goh1977global}
{\sc B.~S. Goh}, {\em Global stability in many-species systems}, The American
  Naturalist, 111 (1977), pp.~135--143.

\bibitem{goh1980management}
{\sc B.-S. Goh}, {\em Management and Analysis of Biological Populations},
  Elsevier, 1980.

\bibitem{goldford2018emergent}
{\sc J.~E. Goldford, N.~Lu, D.~Baji{\'c}, S.~Estrela, M.~Tikhonov,
  A.~Sanchez-Gorostiaga, D.~Segr{\`e}, P.~Mehta, and A.~Sanchez}, {\em Emergent
  simplicity in microbial community assembly}, Science, 361 (2018),
  pp.~469--474.

\bibitem{GolubVanLoan}
{\sc G.~H. Golub and C.~F. Van~Loan}, {\em Matrix Computations}, Johns Hopkins
  Studies in the Mathematical Sciences, Johns Hopkins University Press,
  Baltimore, MD, third~ed., 1996.

\bibitem{hadeler2017book}
{\sc K.~P. Hadeler}, {\em Topics in Mathematical Biology}, Lecture Notes on
  Mathematical Modelling in the Life Sciences, Springer, Cham, 2017.

\bibitem{hewer1988}
{\sc G.~Hewer and C.~Kenney}, {\em The sensitivity of the stable {L}yapunov
  equation}, SIAM J. Control Optim., 26 (1988), pp.~321--344.

\bibitem{HofbauerSigmund}
{\sc J.~Hofbauer and K.~Sigmund}, {\em Evolutionary Games and Population
  Dynamics}, Cambridge University Press, Cambridge, 1998.

\bibitem{hutchinson1957}
{\sc G.~E. Hutchinson}, {\em Population studies: Animal ecology and
  demography---concluding remarks}, Cold Spring Harbor Symposia on Quantitative
  Biology, 22 (1957), pp.~415--427.
\newblock Reprinted in: Bull. Math. Biol. 53 (1991) 193-213.

\bibitem{ings2009ecological}
{\sc T.~C. Ings, J.~M. Montoya, J.~Bascompte, N.~Bl{\"u}thgen, L.~Brown, C.~F.
  Dormann, F.~Edwards, D.~Figueroa, U.~Jacob, J.~I. Jones, et~al.}, {\em
  Ecological networks--beyond food webs}, Journal of Animal Ecology, 78 (2009),
  pp.~253--269.

\bibitem{kokkoris1999patterns}
{\sc G.~Kokkoris, A.~Troumbis, and J.~Lawton}, {\em Patterns of species
  interaction strength in assembled theoretical competition communities},
  Ecology Letters, 2 (1999), pp.~70--74.

\bibitem{kushel2019}
{\sc O.~Y. Kushel}, {\em Unifying matrix stability concepts with a view to
  applications}, SIAM Rev., 61 (2019), pp.~643--729.

\bibitem{lawmorton1993alternative}
{\sc R.~Law and R.~D. Morton}, {\em Alternative permanent states of ecological
  communities}, Ecology, 74 (1993), pp.~1347--1361.

\bibitem{lischke2017finding}
{\sc H.~Lischke and T.~J. L{\"o}ffler}, {\em Finding all multiple stable
  fixpoints of n-species {L}otka--{V}olterra competition models}, Theoretical
  Population Biology, 115 (2017), pp.~24--34.

\bibitem{LiuCaiSu2015}
{\sc H.~Liu, W.~Cai, and N.~Su}, {\em Entropy satisfying schemes for computing
  selection dynamics in competitive interactions}, SIAM J. Numer. Anal., 53
  (2015), pp.~1393--1417.

\bibitem{logofet2005}
{\sc D.~O. Logofet}, {\em Stronger-than-{L}yapunov notions of matrix stability,
  or how ``flowers'' help solve problems in mathematical ecology}, Linear
  Algebra Appl., 398 (2005), pp.~75--100.

\bibitem{lubell1966}
{\sc D.~Lubell}, {\em A short proof of {S}perner's lemma}, J. Combinatorial
  Theory, 1 (1966), p.~299.

\bibitem{may1977}
{\sc R.~M. May}, {\em Thresholds and breakpoints in ecosystems with a
  multiplicity of stable states}, Nature, 269 (1977), pp.~471--477.

\bibitem{may1975}
{\sc R.~M. May and W.~J. Leonard}, {\em Nonlinear aspects of competition
  between three species}, SIAM J. Appl. Math., 29 (1975), pp.~243--253.

\bibitem{mcgehee1977}
{\sc R.~McGehee and R.~A. Armstrong}, {\em Some mathematical problems
  concerning the ecological principle of competitive exclusion}, J.
  Differential Equations, 23 (1977), pp.~30--52.

\bibitem{moon1965}
{\sc J.~W. Moon and L.~Moser}, {\em On cliques in graphs}, Israel J. Math., 3
  (1965), pp.~23--28.

\bibitem{moylan1977}
{\sc P.~J. Moylan}, {\em Matrices with positive principal minors}, Linear
  Algebra Appl., 17 (1977), pp.~53--58.

\bibitem{pocheville2015}
{\sc A.~Pocheville}, {\em The ecological niche: History and recent
  controversies}, in Handbook of Evolutionary Thinking in the Sciences,
  T.~Heams, P.~Huneman, G.~Lecointre, and M.~Silberstein, eds., Springer
  Netherlands, Dordrecht, 2015, pp.~547--586.

\bibitem{smale1976}
{\sc S.~Smale}, {\em On the differential equations of species in competition},
  J. Math. Biol., 3 (1976), pp.~5--7.

\bibitem{smith1973logic}
{\sc J.~M. Smith and G.~R. Price}, {\em The logic of animal conflict}, Nature,
  246 (1973), pp.~15--18.

\bibitem{svirezhev2008nonlinearities}
{\sc Y.~M. Svirezhev}, {\em Nonlinearities in mathematical ecology: Phenomena
  and models: Would we live in {V}olterra's world?}, Ecological Modelling, 216
  (2008), pp.~89--101.

\bibitem{takeuchi}
{\sc Y.~Takeuchi}, {\em Global dynamical properties of {L}otka-{V}olterra
  systems}, World Scientific Publishing Co., Inc., River Edge, NJ, 1996.

\bibitem{tartar1971}
{\sc L.~Tartar}, {\em Une nouvelle caract\'{e}risation des {$M$} matrices},
  Rev. Fran\c{c}aise Informat. Recherche Op\'{e}rationelle, 5 (1971),
  pp.~127--128.

\bibitem{vickers1988patterns}
{\sc G.~T. Vickers and C.~Cannings}, {\em Patterns of {ESS}s. {I}, {II}}, J.
  Theoret. Biol., 132 (1988), pp.~387--408, 409--420.

\bibitem{volterra1928variations}
{\sc V.~Volterra}, {\em Variations and fluctuations of the number of
  individuals in animal species living together}, ICES Journal of Marine
  Science, 3 (1928), pp.~3--51.

\end{thebibliography}

\end{document}